\pgfplotsset{compat=newest}
\newcommand\tikzgraphsettings{\tikzset{
	every node/.style = {
		circle, fill,
		inner sep=0pt,
		minimum size=1.7mm,
		outer sep=1pt,
		auto
	},
	every label/.append style={font=\footnotesize},
	every mark/.append style={
		mark size=2.25pt
	},
	empty/.style = {
		font = \small,
		rectangle,
		inner sep=0pt,
		draw=none,
		fill=none
	},
	>=latex}
}
\colorlet{mplblue}{black}
\colorlet{mplorange}{black}
\colorlet{mplgreen}{black}
\colorlet{mplred}{black}
\colorlet{mplpurple}{black}
\colorlet{mplbrown}{black}
\colorlet{mplgray}{black}
\colorlet{mplyellowgreen}{black}
\colorlet{mplcyan}{black}
\theoremstyle{plain}
\newtheorem{theorem}{Theorem}
\newtheorem*{theorem*}{Theorem}
\newtheorem{proposition}{Proposition}
\newtheorem{lemma}{Lemma}
\newtheorem{corollary}{Corollary}
\theoremstyle{definition}
\newtheorem{definition}{Definition}
\newtheorem{example}{Example}
\newtheorem{axiom}{Axiom}
\theoremstyle{remark}
\newtheorem{remark}{Remark}
\DeclareMathOperator{\id}{Id}
\newcommand{\de}{\ensuremath{\mathrm{d}}}
\newcommand\1{\mathds{1}}
\newcommand\restr[2]{{\left.\kern-\nulldelimiterspace#1\mathchoice{\vphantom{\big|}}{}{}{}\right|_{#2}}}
\title[Modeling advection on directed networks]{Modeling advection on distance-weighted directed networks}
\author{Michele Benzi}
\address{Scuola Normale Superiore. Pisa, PI 56126, Italy.}
\email{michele.benzi@sns.it}
\author{Fabio Durastante}
\address{Università di Pisa. Pisa, PI 56126, Italy.}
\email{fabio.durastante@unipi.it}
\author{Francesco Zigliotto}
\address{Scuola Normale Superiore. Pisa, PI 56126, Italy.}
\email[Corresponding author]{francesco.zigliotto@sns.it}
\keywords{Complex networks; advection; road networks}
\subjclass{05C50, 05C63, 34B45}
\begin{document}

\begin{abstract}
In this paper we propose a model for describing advection dynamics on distance-weighted directed graphs. To this end we establish a set of key properties, or axioms, that a discrete advection operator should satisfy, and prove that there exists an essentially unique operator satisfying  all such properties. Both infinite and finite networks are considered, as well as possible variants and extensions. We illustrate the proposed model through examples, both analytical and numerical, and we describe an application to the simulation of a traffic network.
\end{abstract}

\maketitle
\thispagestyle{empty}

\section{Introduction}

Transport phenomena, encompassing the movement of particles, energy, or information, play a crucial role in numerous scientific and engineering domains. Traditionally, these dynamics have been analyzed within the context of continuous media using tools from differential equations~\cite{MR2273657} and continuum mechanics~\cite{ContMech}. However, many systems of interest, such as social networks, biological networks~\hbox{\cite{BellomoNetworkCovid,AguiarContagion}}, and communication infrastructures, are inherently discrete and are best described using combinatorial graphs~\cite{BarabasiReview,StructureAndDynamics}. This observation motivates our interest in developing a rigorous framework to study transport processes within these discrete structures.

Combinatorial graphs, consisting of nodes connected by directed and weighted edges, are one of the most utilized representations of complex networks. In this discrete setting, we aim to rigorously model transport phenomena involving the movement or flow of quantities associated with the nodes along the edges of the graph, while capturing the essential properties of their continuous counterparts. Specifically, we isolate a set of properties, or axioms, that capture the essence of the physical phenomenon of transport on a network in order to construct a robust theoretical framework grounded in fundamental principles that describe transport processes in discrete systems. This axiomatic approach allows us to uniquely identify a discrete advection operator for distance-weighted directed networks,
and provides a systematic basis for analyzing and understanding transport dynamics on graphs.

Our proposed study complements the analysis of diffusion phenomena on complex networks~\cite{Lambiotte}. Diffusion analysis characterizes the inherent structure and topology of the underlying network, extracting information about diffusion patterns, identifying central \cite{PhysRevResearch.2.033104} and critical nodes for controlling the spread of quantities, designing interventions to achieve a prescribed network state, and recognizing community structures \cite{doi:10.1073/pnas.0903215107}. These studies are often well described by the use of the discrete Laplacian matrix or its variants for both undirected and directed networks, including non-local phenomena \cite{MR4130854,MR4364780,PhysRevE.90.032809}. By considering transport phenomena of different types, we can describe additional network properties and focus on directional and conservative flows, thereby broadening our understanding of network dynamics and enhancing our ability to model and optimize real-world systems.

The paper is organized as follows: in Sections~\ref{sec:related_works} and~\ref{sec:notation}, we review previous definitions of transport-like operators on combinatorial graphs and introduce the necessary notation and formal setting. Sections~\ref{sec:axiomatic_construction} formalizes the properties that an advection operator on a combinatorial graph should possess, introducing our new axiomatic framework and investigating the newly defined dynamic. Within this framework, we prove the existence and uniqueness of the operator in Section~\ref{sec:characterization}. Section~\ref{sec:advection_dynamics} presents examples on finite and infinite graphs, illustrating the chosen axioms and their implications, including applications to road networks. Finally, in Section~\ref{sec:conclusion}, we draw conclusions and suggest possible future developments and applications for the new operator.

\subsection{Related works}
\label{sec:related_works}

In this section, we provide a brief review of contributions related to the advection operator on graphs.

One of the earliest approaches involves studying the dynamics of networked multi-agent systems utilizing an advection-based coordination algorithm \cite{chapman2011,rak2017}. The authors highlight that while consensus dynamics are typically modeled using the discretized diffusion equation \cite{Lambiotte}, advection dynamics offer a complementary approach where the sum of states is conserved. This conservation property makes advection-based algorithms particularly appealing for applications such as formation control and sensor coverage. The authors introduce a formulation of advection dynamics on directed graphs via the continuous transport equation
\[
\dfrac{\partial f}{\partial t} = -\nabla \cdot (\omega f),
\]
where \(\omega\) is a velocity field, and $f$ the density to be transported. In the discrete case, it is assumed that each edge \((u,v)\) of the underlying graph is associated with a positive velocity \(\omega_{uv}\), leading to the discrete equation
\[
\dfrac{\de }{\de t} f_t(u) = -\bigl[A_G f_t\bigr](u)
\]
for any node $u$, where
\[
[A_G f_t](u) = \Biggl[\sum_{w \in N^+(u)} \omega_{uw}\Biggr] f_t(u) - \sum_{v \in N^-(u)} \omega_{vu} f_t(v).
\]

An alternative construction for undirected graphs has also been considered \cite{MR4469494}. The advection operator introduced here leverages the node degrees to define the preferred direction of movement within the network, making it solely dependent on the graph's topological structure. This work extends the operator to an advection-diffusion equation by combining the new advection operator with the standard graph Laplacian. Additionally, the convergence properties of systems governed by advection-diffusion equations on graphs are discussed, demonstrating that the degree-biased advection operator results in dynamics that strongly depend on the network structure, with specific transient dynamics influenced by the non-zero eigenvalues of the operator \cite{MR4469494}.

A tangentially related contribution involves the study of solutions to a discrete Navier-Stokes equation on combinatorial graphs \cite{MR2719791}. The authors propose an adaptation of the equation to finite, connected, weighted, and regular graphs with the aim of obtaining an ordinary differential equation whose solutions correspond to discrete conservation laws on the graphs. To define the gradient operator needed for the discrete reformulation of the Navier-Stokes equation, it is necessary that each node of the graph has the same number of incident edges, allowing the orientation of the motion to be fixed by choosing a consistent permutation that assigns the direction of the flow for each node.

\subsection{Notation and fundamentals on combinatorial graphs}
\label{sec:notation}

Here, we provide the basic definitions and notations that will be used throughout the rest of the article. 

\begin{definition}[Graph]
A \emph{directed and weighted graph} \( G \) is defined as a triplet \( G = (V, E, \omega) \) where:
\begin{itemize}
    \item \( V = \{v_1,v_2,\ldots,v_k,\ldots\} \) is a finite or countable set of nodes.
    \item \( E \) is a set of directed edges $e$, where each edge is an ordered pair \( e = (u, v) \) with \( u, v \in V \).
    \item \( \omega: E \rightarrow \mathbb{R}_+ \) is a weight function that assigns a positive real-valued weight to each edge in \( E \).
\end{itemize}
For any $u \in V$ we denote by $\deg(u)$ the number of edges $e \in E$ having $u$ either as first or second component. We denote with $\deg^+(u)$ the number of edges $e \in E$ having $u$ as first component, and with $\deg^-(u)$ the number of the ones having it as last component.
\end{definition}

\begin{definition}[Oriented graph]
An oriented graph is a directed graph with no bidirectional edges, i.e., for every pair of nodes $u, v \in V$, if $(u, v) \in E$, then $(v, u) \notin E$.
\end{definition}

\begin{definition}[Walk, path and cycle]
A walk in a graph $G = (V, E)$ is a sequence of non necessarily distinct nodes $(v_1, v_2, \ldots, v_k)$ such that for each consecutive pair of nodes $(v_i, v_{i+1})$ with $1 \leq i < k$, there is an edge $(v_i, v_{i+1}) \in E$. We call cycle a walk for which the starting and ending node coincide, while all the others are distinct. A graph without cycles is called acyclic. If for any two nodes in~$V$ there exists a walk having them as endpoints, then we say that $G$ is (strongly) connected.
\end{definition}

\begin{definition}[Oriented tree]\label{def:oriented-tree}
Let $G$ be an oriented graph, and let $\hat{G}$ be the graph constructed from $G$ by reciprocating each edge. If $\hat{G}$ is connected and does not contain cycles, then we call $G$ an oriented tree.
\end{definition}

\begin{definition}[Neighbourhood]\label{def:neighbourhood}
Given a graph $G = (V,E,\omega)$, we denote by $N(u)$ the neighborhood of the node $u\in V$, i.e., the set of nodes that are connected to $u$ by an edge in either direction. We also introduce the notation 
\[
N[u]  = \{u\} \cup N(u)
\]
for closed neighborhood and, more generally,
\[
\begin{cases}
N_{k+1}[u] = \displaystyle \bigcup_{v\in N_k[u]}N[v], & \forall\,k \ge 0\\
N_0[u]=\{u\}.
\end{cases}
\]
The $k$-iterated neighbor of $u$ can be then defined as follows:
\[
N_k(u) =
\begin{cases}
N_k[u]\setminus N_{k-1}[u] & \text{for $k\ge1$}\\
\{u\} & \text{for $k=0$.}
\end{cases}
\]
In some cases, it makes sense to consider only the successor nodes starting from a given node $u$, that is, only those pointed to by an arc of the form $(u,v)$. In this case, we denote the quantities as $N^{+}(u)$, $N^{+}[u]$ and $N_k^{+}[u]$ respectively. We can further consider the entire cone of successors of $u$ and denote it by $N_\infty^+[u]$. The analogous concepts based on predecessor nodes ($N^{-}(u)$, $N^{-}[u]$, etc.) are similarly defined. See Figure~\ref{fig:neighbourhood} for a graphical representation.

\newcommand\NeighborVertex[2][1]{%
	\SetVertexStyle[LineColor=black, MinSize=.1, FillColor=black, TextFont=\small] 
    \Vertex[color=white,shape=diamond,#1]{#2}%
	\SetVertexStyle[LineColor=white, MinSize=.1, FillColor=black, TextFont=\small]     
}

\SetVertexStyle[LineColor=white, MinSize=.1, FillColor=black, TextFont=\small] 

\begin{figure}
    \def\neighborcolor{red}
    \centering
    \begin{subfigure}[b]{.15\columnwidth}
    \begin{tikzpicture}
    \NeighborVertex[x=0,y=0,size=.1]{pl}
    \NeighborVertex[x=0,y=1]{pu}
    \Vertex[x=0.5,y=0.5,label={$u$},position=above,distance=.5mm]{u}
    \NeighborVertex[x=1,y=1]{su}
    \Vertex[x=1,y=0]{sl}
    \Vertex[x=1.5,y=0.5]{v}
    \Edge[Direct](pl)(u)
    \Edge[Direct](pu)(u)
    \Edge[Direct](u)(su)
    \Edge[Direct](su)(v)
    \Edge[Direct](sl)(v)
    \end{tikzpicture}
    \caption{$N(u)$}
    \end{subfigure}
    \quad
    \begin{subfigure}[b]{.15\columnwidth}
    \begin{tikzpicture}
    \NeighborVertex[x=0,y=0]{pl}
    \NeighborVertex[x=0,y=1]{pu}
    \NeighborVertex[x=0.5,y=0.5,label={$u$},position=above,distance=.5mm]{u}
    \NeighborVertex[x=1,y=1]{su}
    \Vertex[x=1,y=0]{sl}
    \NeighborVertex[x=1.5,y=0.5]{v}
    \Edge[Direct](pl)(u)
    \Edge[Direct](pu)(u)
    \Edge[Direct](u)(su)
    \Edge[Direct](su)(v)
    \Edge[Direct](sl)(v)
    \end{tikzpicture}
    \caption{$N_2[u]$}
    \end{subfigure}
    \quad
    \begin{subfigure}[b]{.15\columnwidth}
    \begin{tikzpicture}
    \NeighborVertex[x=0,y=0]{pl}
    \NeighborVertex[x=0,y=1]{pu}
    \Vertex[x=0.5,y=0.5,label={$u$},position=above,distance=.5mm]{u}
    \Vertex[x=1,y=1]{su}
    \Vertex[x=1,y=0]{sl}
    \Vertex[x=1.5,y=0.5]{v}
    \Edge[Direct](pl)(u)
    \Edge[Direct](pu)(u)
    \Edge[Direct](u)(su)
    \Edge[Direct](su)(v)
    \Edge[Direct](sl)(v)
    \end{tikzpicture}
    \caption{$N^-(u)$}
    \end{subfigure}
    \quad
    \begin{subfigure}[b]{.15\columnwidth}
    \begin{tikzpicture}
    \Vertex[x=0,y=0]{pl}
    \Vertex[x=0,y=1]{pu}
    \NeighborVertex[x=0.5,y=0.5,label={$u$},position=above,distance=.5mm]{u}
    \NeighborVertex[x=1,y=1]{su}
    \Vertex[x=1,y=0]{sl}
    \NeighborVertex[x=1.5,y=0.5]{v}
    \Edge[Direct](pl)(u)
    \Edge[Direct](pu)(u)
    \Edge[Direct](u)(su)
    \Edge[Direct](su)(v)
    \Edge[Direct](sl)(v)
    \end{tikzpicture}
    \caption{$N^+_\infty[u]$}
    \end{subfigure}
    \caption{Examples of neighbourhoods: the elements that belong to the set written below the graph are depicted as white squares.}
    \label{fig:neighbourhood}
\end{figure}
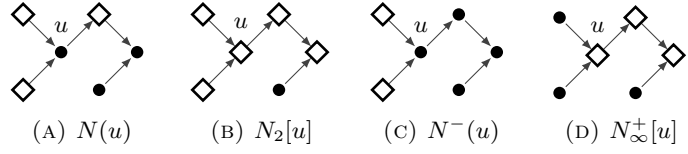
\end{definition}

We now need to restrict the set of graphs we are considering in order to have well-defined advection operators while also retaining graphs that are relevant for applications.

\begin{definition}
We denote by $\mathcal G$ the set of oriented weighted loopless graphs (i.e., no node is directly linked to itself), with upper-bounded degree and lower-bounded edge weight, i.e., for any $G=(V, E, \omega)\in\mathcal G$ there exist $\Delta_G$ and $\delta_G$ such that for any node $u\in V$ and any edge $e\in E$ we have
\[
\deg(u)\le \Delta_G,\quad \omega(e)\ge \delta_G>0.
\]
For any edge $e=(u,v)$ we interpret the weight $\omega(e)$ as the \emph{length} of $e$, i.e., the distance from $u$ to $v$, which we denote by $d_{uv}$.
\end{definition}

\begin{definition}\label{def:function-is-a-sequence}
A function $f$ over a graph $G = (V,E,\omega) \in \mathcal{G}$ is a map $f: V \rightarrow \mathbb{R}$. We denote by 
\[
\| f \|_\infty = \sup_{v \in V} |f(v)|, \text{ and } \| f \|_1 = \sum_{v \in V} |f(v)|,
\]
the supremum and $1$-norm of $f$ respectively, while we denote with $\ell^\infty$ and $\ell^1 \subset \ell^\infty$ the corresponding Banach spaces.
\end{definition}

We emphasize that for graphs in $\mathcal{G}$, the function $f$ can always be represented as an infinite vector in one-to-one correspondence with the nodes of the associated graph $G$. In the particular case where $G$ has finite cardinality, this reduces to a vector in $\mathbb{R}^{|V|}$.

\section{The axiomatic construction}
\label{sec:axiomatic_construction}

The main goal of this paper is to introduce an axiomatic definition of the transport operator on a graph $G \in \mathcal{G}$. This can be described as an operator on~$\ell^\infty$, i.e., the vector space of bounded sequences with the supremum norm (Definition~\ref{def:function-is-a-sequence}).

\begin{definition}
We define a \emph{(discrete) advection operation} as a function
\[
\begin{split}
A : \mathcal{G} &\to B(\ell^\infty)\\
G\in\mathcal{G}&\mapsto A_G,
\end{split}
\]
where $B(\ell^\infty)$ is the set of \emph{bounded linear operators} in $\ell^\infty$.
\end{definition}
An advection operator $A$ maps a graph $G=(V,E,\omega)\in \mathcal{G}$ into a bounded linear operator $A_G:\ell^\infty(V)\to\ell^\infty(V)$, thanks to which we can define the transport equation and its solution~as the following system of first-order ordinary differential equations~(ODE):
\begin{equation}\label{e:ode}
\dfrac{\de }{\de t}f_t(u)=-[A_G\,f_t](u)\text{\quad for any $u\in V$ and $t\ge0$, }
\end{equation}
with initial value $f_0$ and solution
\[
f_t(u)=\bigl[e^{-tA_G}f_0\bigr](u).
\]

Our primary objective is now to derive the conditions that such an $A_G$ must fulfill in order for the solution of~\eqref{e:ode} to exhibit the properties of the discrete analogue of the transport equation. Note that in the case of a finite graph $G$, i.e., when we can identify the function $f$ with a vector $f \in \mathbb{R}^{|V|}$, the ODE~\eqref{e:ode} reads
\[
\begin{split}
\dfrac{\de }{\de t}f_t&=- A_G f_t,\\
f_t&=e^{-tA_G}f_0.
\end{split}
\]

\subsection{Locality}

The first requirement we impose on the advection operator $A$ is a locality constraint, which makes $A$ the analogous of a \emph{first-order} discretization. 

\begin{axiom}[Locality]\label{axiom:locality}
We say that the advection operator $A$ satisfies the axiom of \emph{Locality} if, for any $G\in\mathcal G$ and any node $v$ it holds
\[
[A_G\1_v](u)=
\begin{cases}
\bigl[A_{N[v]}\1_v\bigr](u) & \text{if $u\in N[v]$}\\
0 & \text{otherwise.}
\end{cases}
\]
In the reminder of this paper, we set
\[
a_{uv}=[A_G\1_v](u).
\]
\end{axiom}

This axiom's formulation can be interpreted as follows: consider two graphs $G=(V,E,\omega)$ and $G'=(V',E',\omega')$ and two nodes $v\in V$ and $v'\in V'$ with isomorphic neighborhoods. Then, according to the axiom of Locality, the advection evolution of the functions $\1_v$ and $\1_{v'}$ should be the same, at least to the first order in time, as it holds
\[
e^{-tA_G} \1_v=\1_v-tA_G\1_v+o(t).
\]

\begin{remark}
\label{r:locality}
As we mentioned, the axiom of Locality leads to a first-order discretization, as for any $f\in \ell^\infty$ we have
\[
[A_Gf](u)=\sum_{v\in V}f(v)[A_G\1_v](u)=\sum_{v\in N[u]}a_{uv}f(v),
\]
i.e., $[A_Gf](u)$ only depends on the values of $f$ on the neighborhood of $u$. \end{remark}

The main role of this axiom is to incorporate the topology of the graph into the definition of the advection operator. However, the choice of considering only the first-order neighbors has little to do with our model interpretation, though it leads to a simpler and more concise description. 

\subsection{Advection motion as mass transfer}

The next step is to ensure that the graph advection behaves likes a \emph{transfer of mass}, in analogy with continuous advection, i.e.,
\begin{itemize}
	\item if $f_0$ is a nonnegative function, then the time evolution $f_t=e^{-A_G}f_0$ should remain nonnegative for any $t\ge 0$;
	\item the total ``mass'' (i.e., the sum of the values of $f_t$) should be conserved in the time evolution.
\end{itemize}
We impose the above conditions in the two \emph{Mass Transfer} axioms.

\begin{axiom}[Mass Transfer I]\label{axiom:mass-transfer-I}
We say that the advection operator $A$ satisfies the axiom of \emph{Mass Transfer I} if, given $G\in \mathcal{G}$, we have
\[
f_t=e^{-tA_G}f_0\ge0
\]
for any nonnegative $f_0\in \ell^\infty$ and $t\ge 0$.
\end{axiom}

Since we aim to characterize the entries of the operator, it is more appropriate to reformulate Axiom~\ref{axiom:mass-transfer-I} in terms of $A_G$, rather than its exponential. For graphs $G \in \mathcal{G}$ with a finite number of nodes, such characterization can be obtained as a consequence of the fact that the set of essentially nonnegative matrices, i.e., matrices $P$ such that $[P]_{u,v} \geq 0$ for $u \neq v$, coincides precisely with those for which $e^{t P}$ is nonnegative for all $t \geq 0$; see~\cite[Theorem 10.29]{HighamBook} or the original result by Varga~\cite[Theorem~8.2]{MR1753713}. The next proposition shows that in the case of a (possibly infinite) graph $G \in \mathcal{G}$, the characterization and the proof remain substantially unchanged.

\begin{proposition}
\label{p:varga}
The operator $A$ satisfies the axiom of Mass Transfer~I if and only if, for any $G\in \mathcal{G}$ and $u\ne v$, we have
\begin{equation}
\label{e:positivity}	
a_{uv}=[A_G\1_v](u)\le0.
\end{equation}
\end{proposition}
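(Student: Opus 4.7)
The plan is to adapt the classical Varga argument for essentially non-negative matrices to this operator-theoretic setting, exploiting the Locality axiom to reduce every computation at a single node to a finite sum.

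For the forward direction, suppose $A$ satisfies Mass Transfer~I and fix two distinct nodes $u\neq v$. Take $f_0=\1_v\ge 0$ and set $f_t=e^{-tA_G}\1_v$. Because $A_G\in B(\ell^\infty)$, the map $t\mapsto f_t$ is norm-differentiable with right derivative $-A_G\1_v$ at $t=0$; point evaluation at $u$ is a bounded linear functional on $\ell^\infty$, so
\[
\lim_{t\to 0^+}\frac{f_t(u)-f_0(u)}{t}=-[A_G\1_v](u)=-a_{uv}.
\]
Since $f_0(u)=0$ (as $u\ne v$) and $f_t(u)\ge 0$ by hypothesis, this right derivative is $\ge 0$, giving $a_{uv}\le 0$.

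For the converse, assume $a_{uv}\le 0$ for all $u\ne v$. Set $c=\|A_G\|$, which is finite by boundedness of $A_G$, and define $B=cI-A_G\in B(\ell^\infty)$. Its entries are non-negative: off-diagonal ones are $-a_{uv}\ge 0$, while on the diagonal $c-a_{uu}\ge 0$ because $|a_{uu}|\le\|A_G\1_u\|_\infty\le c$. By Locality (Remark~\ref{r:locality}),
\[
[Bf](u)=(c-a_{uu})f(u)+\sum_{w\in N(u)}(-a_{uw})f(w),
\]
a sum of at most $\Delta_G+1$ terms with non-negative coefficients. Hence $B$ preserves pointwise non-negativity, and inductively so does every power $B^n$. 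Since $B$ is bounded, the series $\sum_{n\ge 0}\frac{t^n}{n!}B^n$ converges in operator norm to $e^{tB}$; its partial sums applied to any $f_0\ge 0$ are pointwise non-negative, hence so is the limit. Finally, $cI$ commutes with $A_G$, so
\[
e^{-tA_G}=e^{-tcI}e^{t(cI-A_G)}=e^{-tc}\,e^{tB},
\]
and therefore $e^{-tA_G}f_0\ge 0$ for every $f_0\ge 0$ and $t\ge 0$.

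The main subtlety is lifting the Varga-type argument from a finite matrix to a bounded operator on $\ell^\infty$ over a possibly infinite graph, since one cannot blindly multiply ``matrices'' with infinitely many entries. Locality resolves this cleanly: every quantity $[B^n f](u)$ unfolds as a finite sum over length-$n$ walks starting at $u$, each term being a product of non-negative coefficients, and norm-convergence of the exponential series transfers to pointwise non-negativity through the continuity of point evaluation on $\ell^\infty$.
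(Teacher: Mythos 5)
Your proof is correct and follows essentially the same route as the paper: the first-order Taylor expansion of $e^{-tA_G}\1_v$ at $t=0$ for the forward direction, and Varga's shift $B=c\,\id-A_G$ with the entrywise-nonnegative exponential series for the converse. Your explicit appeal to Locality to justify that $B$ acts by finite nonnegative sums (and hence preserves pointwise nonnegativity on all of $\ell^\infty$) is a slightly more careful rendering of a step the paper leaves implicit, but the argument is the same.
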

\begin{proof}
Let $A$ satisfy the axiom of Mass Transfer I and let $f_0\in \ell^\infty$. We have
\[
f_t(u)=\bigl[e^{-tA_G}f_0\bigr](u) = f_0(u)-t[A_Gf_0](u)+o(t)
\]
and choosing $f_0=\1_v$, with $v\ne u$, we obtain
\[
[A_G\1_v](u)=\lim_{t\to 0^+}\dfrac{-f_t(u)}{t}\le 0,
\]
which concludes the first part of the proof.

On the other hand, let us assume that $A$ satisfies~\eqref{e:positivity}. Since $A_G$ is a bounded operator, we have
\begin{equation*}
|a_{uu}|\le\bigl\lVert A_G\1_u\bigr\rVert_\infty\le \lVert A_G\rVert_\infty<\infty.
\end{equation*}
Setting $\alpha_G=\lVert A_G\rVert_\infty$, the (bounded) operator $\tilde A_G=-A_G+\alpha_G\id$, where $\id$ is the identity operator, is such that $[\tilde A_G\1_v](u)\ge0$. It follows that $\tilde A_G f_0\ge0$ for any nonnegative $f_0$, so we have
\[
e^{t\tilde A_G}f_0=
\sum_{k=0}^\infty\dfrac{t^k}{k!}\tilde A_G^kf_0\ge0.
\]
The conclusion follows since $e^{-tA_G} = e^{-\alpha_G} e^{t\tilde A_G}$.
\end{proof}

To formulate an axiom that guarantees the conservation of mass we must move from the setting of $\ell^\infty$ to that of $\ell^1$, that is, consider the subspace of sequences whose series are absolutely convergent. In this way we can ensure that the total mass of the function $f_t$, in the sense of Definition~\ref{def:function-is-a-sequence}, is a well-defined quantity.

\begin{axiom}[Mass Transfer II]\label{axiom:mass-transfer-II}
We say that the advection operator $A$ satisfies the axiom of \emph{Mass Transfer~II} if, given $G\in \mathcal{G}$, for any $f_0\in \ell^1$ and $t\ge 0$, we have
\[
\bigl\lVert f_t\bigr\rVert_1=\bigl\lVert f_0\bigr\rVert_1,
\]
where $f_t=e^{-tA_G}f_0$.
\end{axiom}

Again, we need to express Axiom~\ref{axiom:mass-transfer-II} as a property of the entries of the operator~$A_G$. In the case where $G$ has a finite number of nodes, the characterization reduces to requiring that the sum of the entries of columns of the matrix representing $A_G$ be equal to zero \cite{MR563986} and, again, the same characterization remains valid for the entire class~$\mathcal{G}$.

\begin{proposition}
\label{p:mass_conservation}
Let $A$ satisfy the axiom of Mass Transfer I. Then $A$ satisfies the axiom of Mass Transfer II if and only if, for any $G=(V,E, \omega)\in\mathcal{G}$ and any $v\in V$, we have
\begin{equation}
\label{e:column_sums}
\sum_{u\in V}a_{uv}=\sum_{u\in V}[A_G\1_v](u)=0.
\end{equation}
\end{proposition}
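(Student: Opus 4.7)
In the presence of Mass Transfer~I the axiom of Mass Transfer~II reduces essentially to preservation of $\sum_u f_t(u)$: since $f_t \ge 0$ whenever $f_0 \ge 0$, the $\ell^1$-norm coincides with the ordinary sum on the nonnegative cone, and linearity extends the statement to general signed data via the decomposition $f_0 = f_0^+ - f_0^-$. My approach is therefore to translate this differential identity into a column-sum condition on the entries of $A_G$, and back.

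A preliminary step is to promote $A_G$ from a bounded operator on $\ell^\infty$ to a bounded operator on $\ell^1$. By Locality (Axiom~\ref{axiom:locality}) each row of $A_G$ has at most $\Delta_G + 1$ nonzero entries $a_{uv}$, all of which are bounded in absolute value by $\|A_G\|_\infty$; using again the degree bound one transfers this uniform control to the column sums of $|a_{uv}|$, giving boundedness on $\ell^1$ and making $e^{-tA_G}$ a strongly continuous semigroup on $\ell^1$. All infinite sums, time derivatives, and Fubini-type interchanges used below are then legitimate.

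For the ``$\Leftarrow$'' direction, assume the column-sum condition and take a nonnegative $f_0 \in \ell^1$. Then $f_t \ge 0$ by Mass Transfer~I, and
\begin{equation*}
\frac{\de}{\de t}\sum_{u \in V} f_t(u) = -\sum_{u \in V}[A_G f_t](u) = -\sum_{v \in V} f_t(v)\sum_{u \in V}a_{uv} = 0,
\end{equation*}
so $\|f_t\|_1 = \sum_u f_t(u) = \sum_u f_0(u) = \|f_0\|_1$ is constant in $t$. Conversely, for the ``$\Rightarrow$'' direction, specialise to $f_0 = \1_v$: then $\|f_0\|_1 = 1$ and, by Mass Transfer~I, $\sum_u f_t(u) = \|f_t\|_1 = 1$ for every $t \ge 0$. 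The expansion $f_t = \1_v - t\, A_G \1_v + t^2 R(t)$, with $\|R(t)\|_1$ uniformly bounded near $0$, yields $1 = 1 - t\sum_u a_{uv} + O(t^2)$, so sending $t \to 0^+$ gives $\sum_u a_{uv} = 0$.

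The principal obstacle is the preliminary passage to $\ell^1$, since the axiomatic setup only promises $A_G \in B(\ell^\infty)$ whereas mass conservation is fundamentally an $\ell^1$ statement; one must exploit Locality together with the uniform degree and weight bounds defining $\mathcal{G}$ to bridge the two. Once this is done, both implications mirror the finite-dimensional column-sum argument of~\cite{MR563986} already cited in the text.
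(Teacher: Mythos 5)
Your overall skeleton (test on $\1_v$ and expand to first order for the forward direction; show that $\sum_{u}$ annihilates every term of order $k\ge 1$ for the converse) is the same as the paper's, but your resolution of what you correctly identify as the principal obstacle --- promoting $A_G$ to a bounded operator on $\ell^1$ --- has a genuine gap: you derive $\ell^1$-boundedness from Locality (Axiom~\ref{axiom:locality}), which is \emph{not} a hypothesis of Proposition~\ref{p:mass_conservation}. The statement assumes only Mass Transfer I, and the paper is careful to stay within that. For the ``if'' direction, Lemma~\ref{l:mass_conservation} extracts the bound $\lVert A_G\rVert_1\le 2\lVert A_G\rVert_\infty$ from Mass Transfer I \emph{together with the column-sum condition itself}: since $a_{uv}\le 0$ for $u\ne v$ and the columns sum to zero, $\sum_u|a_{uv}|=2a_{vv}\le 2\lVert A_G\rVert_\infty$ --- no locality or degree bound needed. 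For the ``only if'' direction no a priori $\ell^1$ bound is available at all (the column sums are precisely what is to be proved), and your expansion $f_t=\1_v-tA_G\1_v+t^2R(t)$ with $\lVert R(t)\rVert_1$ bounded is exactly the step you cannot justify. The paper sidesteps this with Varga's trick: writing $e^{-tA_G}=e^{-\alpha_G t}e^{t(\alpha_G\id-A_G)}$ with $\alpha_G=\lVert A_G\rVert_\infty$, every entry of every term of the expansion of $e^{t(\alpha_G\id-A_G)}\1_v$ is nonnegative, so Tonelli licenses the interchange of the sums over $u$ and $k$ with no boundedness hypothesis, and the first-order coefficient then forces~\eqref{e:column_sums}. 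If you want to keep your route, you must either add Locality to the hypotheses or replace your preliminary step by one of these two devices.

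A secondary point: your claim that linearity extends mass conservation from nonnegative to signed data via $f_0=f_0^+-f_0^-$ does not work, because $\lVert\cdot\rVert_1$ is not linear --- positive and negative mass travelling to a common node can cancel, so one only gets $\lVert f_t\rVert_1\le\lVert f_0\rVert_1$ in general. (The paper's own proof of the ``if'' direction also treats only nonnegative $f_0$, so this is a shared restriction rather than a defect specific to your argument, but you should not assert that linearity closes it.)
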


To prove Proposition~\ref{p:mass_conservation} we need the following preliminary result.
\begin{lemma}
\label{l:mass_conservation}
Le $A$ satisfy both~\eqref{e:column_sums} and the axiom of Mass Transfer I. Then for any $G\in\mathcal G$ the operator $A_G$ is bounded in $\ell^1$ with $\lVert A_G\rVert_1\le 2\lVert A_G\rVert_\infty$ and it holds
\[
\sum_{u\in V}[A_G^kf](u)=0
\]
for any $f\in \ell^1$ and any $k\ge 1$.
\end{lemma}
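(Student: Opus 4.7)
The plan is to translate the two hypotheses into entrywise information about the $a_{uv}$. By Proposition~\ref{p:varga} we have $a_{uv}\le 0$ for $u\ne v$; combined with the column-sum condition from~\eqref{e:column_sums}, this gives $a_{vv}=-\sum_{u\ne v}a_{uv}=\sum_{u\ne v}|a_{uv}|\ge 0$, so that
\[
\sum_{u\in V}|a_{uv}|=2\,a_{vv}.
\]
Moreover, since $a_{vv}=[A_G\1_v](v)$ and $\lVert\1_v\rVert_\infty=1$, the chain $a_{vv}=|a_{vv}|\le \lVert A_G\1_v\rVert_\infty\le\lVert A_G\rVert_\infty$ yields the uniform bound $\sup_{v\in V}a_{vv}\le\lVert A_G\rVert_\infty<\infty$.

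Next I would establish $\ell^1$-boundedness from the ``matrix'' representation of $A_G$. By locality (Remark~\ref{r:locality}) and the bound $\deg(u)\le\Delta_G$, the expression $[A_Gf](u)=\sum_{v\in N[u]}a_{uv}f(v)$ is a finite sum for every $u$, so $A_Gf$ is pointwise well defined for any $f$. For $f\in\ell^1$, Tonelli's theorem gives
\[
\lVert A_Gf\rVert_1\le\sum_{u\in V}\sum_{v\in V}|a_{uv}|\,|f(v)|=\sum_{v\in V}|f(v)|\sum_{u\in V}|a_{uv}|=\sum_{v\in V}2a_{vv}|f(v)|\le 2\lVert A_G\rVert_\infty\,\lVert f\rVert_1,
\]
so $A_G$ restricts to a bounded operator on $\ell^1$ with $\lVert A_G\rVert_1\le 2\lVert A_G\rVert_\infty$.

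The column-sum identity for higher powers then follows by induction on $k\ge 1$. For $k=1$, the finiteness of the double sum of absolute values established above justifies exchanging the order of summation via Fubini, so that
\[
\sum_{u\in V}[A_Gf](u)=\sum_{u\in V}\sum_{v\in V}a_{uv}f(v)=\sum_{v\in V}f(v)\sum_{u\in V}a_{uv}=0
\]
by~\eqref{e:column_sums}. For $k\ge 2$, since $A_G^{k-1}f\in\ell^1$ by the $\ell^1$-boundedness just proved, applying the $k=1$ case to $A_G^{k-1}f$ concludes the induction.

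The only delicate step is the Fubini/Tonelli exchange, but the uniform bound $2\sup_v a_{vv}\le 2\lVert A_G\rVert_\infty<\infty$ makes the double series absolutely convergent, so no genuine obstacle arises; the rest is bookkeeping built on Proposition~\ref{p:varga} and the column-sum hypothesis.
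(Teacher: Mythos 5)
Your proof is correct and follows essentially the same route as the paper's: sign information from Mass Transfer I plus the column-sum condition gives $\sum_{u}|a_{uv}|=2a_{vv}\le 2\lVert A_G\rVert_\infty$, which yields the $\ell^1$ bound, justifies the interchange of summation, and the higher powers follow by applying the $k=1$ case to $A_G^{k-1}f\in\ell^1$. The appeal to Locality for pointwise well-definedness is superfluous (and not among the lemma's hypotheses), but harmless, since for $f\in\ell^1$ the representation $[A_Gf](u)=\sum_v f(v)a_{uv}$ already follows from $\ell^\infty$-boundedness of $A_G$ and $\lVert\cdot\rVert_\infty\le\lVert\cdot\rVert_1$.
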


\begin{proof}
Because of the axiom of Mass Transfer I, we have $[A_G\1_v](u)\le 0$ for $u\ne v$ and then~\eqref{e:column_sums} implies that $[A_G\1_v]_v\ge 0$. Hence, for any $f\in \ell^1$ it holds
\begin{equation}
\label{e:mass_conservation_proof_abs_conv}
\begin{split}
\lVert A_G f\rVert_1
&\le\sum_{u\in V}\sum_{v\in V} |f(v)\,a_{uv}|\\
&=\sum_{v\in V} |f(v)|\biggl(a_{vv}-\sum_{u\ne v}a_{uv}\biggr)\\
&=2\sum_{v\in V}|f(v)|\cdot a_{vv}\le 2\lVert A_G\rVert_\infty\lVert f\rVert_1
\\
\end{split}
\end{equation}
and therefore $\lVert A_G\rVert_1\le 2\lVert A_G\rVert_\infty$. As a consequence, for any $f\in \ell^1$ it holds
\[
\sum_{u\in V}[A_Gf](u)=\sum_{u\in V}\sum_{v\in V}f(v)a_{uv}=\sum_{v\in V}f(v)\sum_{u\in V}a_{uv}=0,
\]
where we can change the order of summation because of the absolute convergence proved in~\eqref{e:mass_conservation_proof_abs_conv}. In particular, for any node $v$ and $k\ge1$ we have
\[
\sum_{u\in V}\bigl[A_G^k\1_v\bigr](u)=\sum_{u\in V}\bigl[A_G \underbrace{A_G^{k-1}\1_v}_{\in \ell^1}\,\bigr](u)=0.\qedhere
\]
\end{proof}

\begin{proof}[Proof of Proposition~\ref{p:mass_conservation}]
Let $A$ be an operator that satisfies both the axioms of Mass Transfer.
We rely again on the idea (Varga's trick) from the proof of~\cite[Theorem~8.2]{MR1753713}: setting $\alpha_G=\lVert A_G\rVert_\infty$, we have
\[
e^{\alpha_Gt}\bigl\lVert e^{-tA_G}\1_v\bigr\rVert_1
=\bigl\lVert e^{t(\alpha_G\id-A_G)}\1_v\bigr\rVert_1\\
=\sum_{u\in V}\sum_{k=0}^\infty\dfrac{t^k}{k!}\bigl[(\alpha_G\id-A_G)^k\1_v\bigr](u).
\]
Since every term is nonnegative, the sum converges absolutely and we can change the order of summation. We obtain
\[
e^{\alpha_Gt}\underbrace{\bigl\lVert e^{-tA_G}\1_v\bigr\rVert_1}_{=1}=1+t\left(\alpha_G-\sum_{u\in V}[A_G\1_v](u)\right)+o(t),
\]
which implies~\eqref{e:column_sums}.

On the other hand, let $A$ satisfy the axiom of Mass Transfer I and suppose that~\eqref{e:column_sums} holds. Given a nonnegative $f_0\in \ell^1$, for $t\ge0$ we have
\begin{equation}
\label{e:mass_conservation_proof_a}
\bigl\lVert f_t\bigr\rVert_1 
= \sum_{u\in V} \bigl[e^{-tA_G}f_0\bigr](u)
= \bigl\lVert f_0\bigr\rVert_1 + \sum_{u\in V}\sum_{k=1}^\infty\dfrac{(-t)^k}{k!}\sum_{v\in V}f_0(v)\bigl[A_G^k\1_v\bigr](u).
\end{equation}
Thanks to Lemma~\ref{l:mass_conservation}, we understand that the sum absolutely converges:
\[
\sum_{u\in V}\sum_{k=1}^\infty\sum_{v\in V}\dfrac{t^k}{k!}f_0(v)\left|\bigl[A_G^k\1_v\bigr](u)\right|
=\sum_{k=1}^\infty\dfrac{t^k}{k!}\sum_{v\in V}f_0(v)\left\lVert A_G^k\1_v\right\rVert_1\le \bigl\lVert f_0\bigr\rVert_1e^{t\lVert A_G\rVert_1}
\]
so we can change the order of summation of~\eqref{e:mass_conservation_proof_a} and obtain
\[
\bigl\lVert f_t\bigr\rVert_1
=\bigl\lVert f_0\bigr\rVert_1 + \sum_{k=1}^\infty\dfrac{(-t)^k}{k!}\sum_{v\in V}f_0(v)\underbrace{\sum_{u\in V}\bigl[A_G^k\1_v\bigr](u)}_{=0}=\bigl\lVert f_0\bigr\rVert_1,
\]
where the last equality is due to~Lemma~\ref{l:mass_conservation}.
\end{proof}

Thanks to the first Gershgorin's Theorem \cite{HornJohnson}, we can formulate a stability result for graphs $G \in \mathcal{G}$ of finite size.
\begin{proposition}
\label{p:stability}
If $A$ satisfies both the axioms of Mass Transfer, than for any finite $G\in \mathcal G$, the matrix $-A_G$ is semistable, i.e., $\Re(\lambda)\le0$ for every eigenvalue $\lambda$ of~$-A_G$.
\end{proposition}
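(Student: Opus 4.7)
The plan is to combine the two already-established characterizations of the Mass Transfer axioms with a column-version of Gershgorin's theorem. From Proposition~\ref{p:varga} we know that $a_{uv}\le 0$ whenever $u\ne v$, and from Proposition~\ref{p:mass_conservation} we know that every column of the (finite) matrix $A_G$ sums to zero. These two facts immediately yield
\[
a_{vv}=-\sum_{u\ne v}a_{uv}=\sum_{u\ne v}|a_{uv}|\ge 0,
\]
so the diagonal entry of $A_G$ at $v$ equals the sum of the absolute values of the off-diagonal entries in its column.

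Next I would apply Gershgorin's theorem to $A_G^T$ (which has the same spectrum as $A_G$), obtaining that every eigenvalue $\mu$ of $A_G$ lies in the union of the column Gershgorin disks
\[
\bigcup_{v\in V}\Bigl\{z\in\mathbb{C} : |z-a_{vv}|\le \sum_{u\ne v}|a_{uv}|\Bigr\} = \bigcup_{v\in V}\Bigl\{z\in\mathbb{C} : |z-a_{vv}|\le a_{vv}\Bigr\}.
\]
Since $a_{vv}\ge 0$, each such disk is a closed disk centered on the nonnegative real axis with radius equal to its center, hence it is contained in the closed right half-plane $\{z : \Re(z)\ge 0\}$. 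Therefore every eigenvalue $\mu$ of $A_G$ satisfies $\Re(\mu)\ge 0$, and consequently every eigenvalue $\lambda=-\mu$ of $-A_G$ satisfies $\Re(\lambda)\le 0$, which is exactly the semistability claim.

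There is no genuine obstacle in this argument: the substantive work has already been done in Propositions~\ref{p:varga} and~\ref{p:mass_conservation}, which turned the Mass Transfer axioms into the sign and column-sum conditions on the entries of $A_G$. The only point that requires a small amount of care is the use of the \emph{column} form of Gershgorin's theorem (or, equivalently, the observation that $A_G$ and $A_G^T$ share their spectrum), since the axiom gives information about columns rather than rows; this is exactly why the statement is restricted to finite graphs, where passing to the transpose is unproblematic.
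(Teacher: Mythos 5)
Your proof is correct and is exactly the argument the paper intends: the paper merely cites Gershgorin's first theorem for this proposition, and your write-up supplies the details (sign and zero-column-sum conditions from Propositions~\ref{p:varga} and~\ref{p:mass_conservation}, then the column Gershgorin disks $|z-a_{vv}|\le a_{vv}$ lying in the closed right half-plane). No discrepancy with the paper's approach.
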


\begin{remark}
Under the assumptions of Proposition~\ref{p:stability}, the matrix~$A_G$ is a singular M-matrix, i.e., a matrix whose off-diagonal entries are less than or equal to zero and whose eigenvalues have nonnegative real parts \cite{berman1994nonnegative}.
\end{remark}

We conclude the section by providing some examples of $A_G$ operators on a finite graph that show what matrices look like in cases where neither, one, or both of the Axioms~\ref{axiom:mass-transfer-I} and~\ref{axiom:mass-transfer-II} are satisfied.

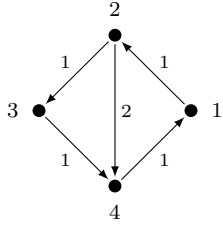
\begin{figure}
\centering
\begin{tikzpicture}
\tikzgraphsettings
\begin{scope}
\foreach \i in {1,...,4} {
	\node (\i) [label={[label distance=2]{(\i-1)*360/4}:$\i$}] at ({(\i-1)*360/4}:1) {};
}
\foreach \i in {1,...,4} {
	\ifnum\i=4 \pgfmathtruncatemacro\next{1}
	\else\pgfmathtruncatemacro\next{\i+1}\fi
	\draw (\i) [->] to (\next);
}
\draw (2) [->] to (4);
\node[empty] at (.65,.65) {\scriptsize$1$};
\node[empty] at (-.65,.65) {\scriptsize$1$};
\node[empty] at (.65,-.65) {\scriptsize$1$};
\node[empty] at (-.65,-.65) {\scriptsize$1$};
\node[empty] at (.15,0) {\scriptsize$2$};
\end{scope}
\end{tikzpicture}
\caption{An example of a finite graph $G\in\mathcal G$. The number on each edge specifies its length.}
\label{f:mass_transfer}
\end{figure}
\begin{example}\label{example:four-advections}
Consider the following advection operators, defined for a graph $G\in\mathcal G$ and a function $f\in \ell^\infty$:
\[
\begin{split}
\bigl[A^{(1)}_Gf\bigr](u)&=\sum_{v\in N^+(u)}\dfrac{f(v)-f(u)}{d_{uv}}\\
\bigl[A^{(2)}_Gf\bigr](u)&=\sum_{v\in N^-(u)}\dfrac{f(u)-f(v)}{d_{vu}}\\ 
\bigl[A^{(3)}_Gf\bigr](u)&=\left[\sum_{w\in N^+(u)}\dfrac{1}{d_{uw}}\right]f(u)-\sum_{v\in N^-(u)}\dfrac{f(v)}{d_{vu}}\\
\bigl[A^{(4)}_Gf\bigr](u)&=\left[\sum_{w\in N^+(u)}\dfrac{1}{\deg^+(u)\,d_{uw}}\right]f(u)-\sum_{v\in N^-(u)}\dfrac{f(v)}{\deg^+(v)\,d_{vu}},
\end{split}
\]
where $A_G^{(1)}$ and $A_G^{(2)}$ correspond, respectively, to a generalization of \emph{forward} and \emph{backward Euler}, while $A_G^{(3)}$ and $A_G^{(4)}$ are modified versions of $A_G^{(2)}$. In the case of the finite graph of Figure~\ref{f:mass_transfer}, the four operators correspond to the following matrices: 
\begin{equation}
\label{e:A1-4}
\begin{aligned}
A^{(1)}_G&=
\text{\small$\begin{bmatrix}
  -1 & 1 & 0 & 0\\
  0 & -3/2 & 1 & 1/2\\
  0 & 0 & -1 & 1\\
  1 & 0 & 0 & -1\\
\end{bmatrix}$}\!,
& A^{(2)}_G&=
\text{\small$\begin{bmatrix}
  1 & 0 & 0 & -1\\
  -1 & 1 & 0 & 0\\
  0 & -1 & 1 & 0\\
  0 & -1/2 & -1 & 3/2\\
\end{bmatrix}$}\!,\\[2ex]
A^{(3)}_G&=
\text{\small$\begin{bmatrix}
  1 & 0 & 0 & -1\\
  -1 & 3/2 & 0 & 0\\
  0 & -1 & 1 & 0\\
  0 & -1/2 & -1 & 1\\
\end{bmatrix}$}\!,
& A^{(4)}_G&=
\text{\small$\begin{bmatrix}
  1 & 0 & 0 & -1\\
  -1 & 3/4 & 0 & 0\\
  0 & -1/2 & 1 & 0\\
  0 & -1/4 & -1 & 1\\
\end{bmatrix}$}\!.
\end{aligned}
\end{equation}
We note that $A^{(1)}$ fails to fulfill any of the Mass Transfer axioms, while $A^{(2)}$ only satisfies the axiom of Mass Transfer I. On the other hand, $A^{(3)}$ and $A^{(4)}$ satisfy both axioms.
\end{example}

\subsection{Moving forward and at the right speed}

The advection operator described so far models a motion based on local neighborhood information, with some global conservation constraints. It is time to specify how the motion's direction and speed are influenced by the edges' orientation and length.

The first axiom of Advection ensures that the motion occurs along the direction of the edges, never going backwards: a mass concentrated at a single node $v$ can only move inside $v$'s cone of successors $N^+_\infty[v]$ (Definition~\ref{def:neighbourhood}). 

\begin{axiom}[Advection I]\label{axiom:advection-I}
We say that $A$ satisfies the axiom of \emph{Advection I} if, given $G\in\mathcal{G}$, the support of $f_t=e^{-tA_G}\1_v$ is concentrated at $N^+_\infty[v]$,
for any node $v$ and any $t\ge 0$.
\end{axiom}
We recast it as a requirement on the operator $A_G$ via the following proposition.
\begin{proposition}
\label{p:forwardness}
An advection operator $A$ satisfies the axiom of Advection I (Axiom~\ref{axiom:advection-I}) if and only if, for any $G\in\mathcal{G}$ and any two nodes $v$ and $u\not\in N^+_\infty[v]$, it holds
\begin{equation}
\label{e:forward_condition}
a_{uv}=[A_G\1_v](u)=0.
\end{equation}
\end{proposition}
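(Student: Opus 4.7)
The proposition is an ``if and only if'' between a spectral/analytic statement about $e^{-tA_G}\mathds{1}_v$ and a combinatorial statement about the entries $a_{uv}$. The easy direction is forward: I would use the first-order Taylor expansion $e^{-tA_G}\mathds{1}_v = \mathds{1}_v - tA_G\mathds{1}_v + o(t)$, already invoked in the proof of Proposition~\ref{p:varga}. For $u \notin N_\infty^+[v]$ we have $u\ne v$ (since $v\in N_\infty^+[v]$), so $\mathds{1}_v(u)=0$ and Axiom~\ref{axiom:advection-I} gives $f_t(u)=0$ for all small $t\ge 0$. Dividing by $-t$ and letting $t\to 0^+$ yields $a_{uv}=[A_G\mathds{1}_v](u)=0$.

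For the converse, the key structural fact is the \emph{transitivity of the successor cone}: if $w\in N_\infty^+[v]$ then $N_\infty^+[w]\subseteq N_\infty^+[v]$, simply by concatenating directed walks. Taking the contrapositive, whenever $w\in N_\infty^+[v]$ and $u\notin N_\infty^+[v]$, we must have $u\notin N_\infty^+[w]$, and hence by hypothesis $a_{uw}=0$. This is exactly what is needed to propagate the support condition through powers of $A_G$.

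The core step is then an induction on $k\ge 0$ showing that $A_G^k\mathds{1}_v$ is supported in $N_\infty^+[v]$. The base case $k=0$ is trivial. For the inductive step, set $g=A_G^{k-1}\mathds{1}_v\in\ell^\infty$ (it lies in $\ell^\infty$ since $A_G$ is bounded). By Remark~\ref{r:locality},
\[
[A_G g](u)=\sum_{w\in N[u]}a_{uw}\,g(w).
\]
For $u\notin N_\infty^+[v]$, each term vanishes: if $w\notin N_\infty^+[v]$ then $g(w)=0$ by induction, while if $w\in N_\infty^+[v]$ then $a_{uw}=0$ by the transitivity observation above. Hence $A_G^k\mathds{1}_v$ vanishes outside $N_\infty^+[v]$.

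Finally, I would combine these termwise vanishings with the norm-convergent exponential series to conclude. For $u\notin N_\infty^+[v]$,
\[
\bigl[e^{-tA_G}\mathds{1}_v\bigr](u)=\sum_{k=0}^\infty\frac{(-t)^k}{k!}\bigl[A_G^k\mathds{1}_v\bigr](u)=0,
\]
which is exactly Axiom~\ref{axiom:advection-I}. I do not anticipate serious obstacles; the only thing to be careful about is that the induction is carried out entrywise, so it does not require $g=A_G^{k-1}\mathds{1}_v$ to be finitely supported — only that $A_G$ is bounded on $\ell^\infty$ (so $g\in\ell^\infty$) and that Remark~\ref{r:locality} applies pointwise.
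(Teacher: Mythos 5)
Your proposal is correct and follows essentially the same strategy as the paper: the forward implication via the first-order expansion of $e^{-tA_G}\mathds{1}_v$ at $t=0$, and the converse via an induction showing that $A_G^k\mathds{1}_v$ vanishes outside $N^+_\infty[v]$, combined with the exponential series. The one difference worth noting is in how the induction is arranged. You peel off the \emph{last} factor, writing $[A_G g](u)=\sum_{w\in N[u]}a_{uw}g(w)$ via Remark~\ref{r:locality}; but that identity is a consequence of the Locality axiom, which is not among the hypotheses of Proposition~\ref{p:forwardness}, so as written your inductive step imports an assumption the statement does not grant. The paper instead peels off the \emph{first} factor, decomposing $A_G\mathds{1}_v=\sum_{w\in N^+_\infty[v]}a_{wv}\mathds{1}_w$ (which uses only the support condition~\eqref{e:forward_condition}) and then applying the inductive hypothesis to each $\mathds{1}_w$ with the same cone-transitivity observation you state explicitly. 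Strictly speaking both arguments need the operator to act entrywise on infinite combinations of indicators (an interchange that is silent in the paper and only becomes a finite sum once Locality bounds the support), so your version is not worse in substance; but to match the stated hypotheses you should either rearrange the induction as the paper does or justify the expansion of $[A_Gg](u)$ without appealing to Locality.
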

\begin{proof}
If $A$ satisfies the axiom of Advection I, then $f_t(u)=\bigl[e^{-tA_G}\1_v\bigr](u)$ is constant (null) in time, hence
\[
\dfrac{\de }{\de t}f_t(u)=-[A_Gf_t](u)=0
\]
and the conclusion follows by taking $t=0$.

On the other hand, if~\eqref{e:forward_condition} holds, then we can inductively prove that $[A_G^k\1_v](u)$ is null as well, for any $u\not\in N^+_\infty[v]$ and for any $k\ge 0$. Indeed, we have
\[
[A_G^k\1_v](u)
=\Biggl[A_G^{k-1}\underbrace{\sum_{w\in N^+_\infty[v]}a_{wv}\1_w\!}_{=A_G\1_v}\,\Biggr](u)
=\!\!\sum_{w\in N^+_\infty[v]}a_{wv}\underbrace{\bigl[A_G^{k-1}\1_w\bigr](u)}_{=0}.
\]
The conclusion follows by considering the Taylor expansion of $f_t(u)$.
\end{proof}
\begin{corollary}
\label{c:forwardness}
If $A$ satisfies the axiom of Advection I (Axiom~\ref{axiom:advection-I}), for any $G\in\mathcal G$ and any function $f_0$ whose domain is concentrated at a set $S$ of nodes, we have
\[
e^{-tA_G}f_0(u)=0,\quad\bigl[A_G^kf_0\bigr](u)=0
\]
for any $u\not\in N^+_\infty[S]=\bigcup_{w\in S}N^+_\infty[w]$ and any $t\ge0$ and $k\ge0$.
\end{corollary}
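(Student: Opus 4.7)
The plan is to prove the identity $[A_G^k f_0](u)=0$ for every $u\notin N_\infty^+[S]$ and every $k\ge 0$ by induction on $k$, and then deduce the statement about $e^{-tA_G}f_0$ by term-by-term evaluation of the exponential series. The key structural observation is that the subspace of functions supported on $N_\infty^+[S]$ is invariant under $A_G$, which is essentially a propagation-of-support statement.

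First I would establish this invariance: given any $g\in\ell^\infty$ vanishing outside $N_\infty^+[S]$, I claim $A_G g$ also vanishes outside $N_\infty^+[S]$. Fix $u\notin N_\infty^+[S]$. By Remark~\ref{r:locality},
\[
[A_G g](u)=\sum_{v\in N[u]}a_{uv}\,g(v),
\]
which is a finite sum since $\deg(u)\le\Delta_G$. For a term $a_{uv}g(v)$ to be nonzero, we need $g(v)\ne 0$, forcing $v\in N_\infty^+[S]$, and simultaneously $a_{uv}\ne 0$, which by Proposition~\ref{p:forwardness} forces $u\in N_\infty^+[v]$. But $v\in N_\infty^+[w]$ for some $w\in S$ combined with $u\in N_\infty^+[v]$ yields $u\in N_\infty^+[w]\subseteq N_\infty^+[S]$ (concatenation of directed walks), contradicting the choice of $u$. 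Hence every term vanishes and $[A_G g](u)=0$.

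With the invariance in hand, the induction is immediate. The base case $k=0$ holds because $S\subseteq N_\infty^+[S]$, so $u\notin N_\infty^+[S]$ implies $u\notin S$ and therefore $f_0(u)=0$. The inductive step is exactly the invariance applied to $g=A_G^{k-1}f_0$. For the exponential part, since $A_G$ is bounded on $\ell^\infty$, the series $e^{-tA_G}f_0=\sum_{k\ge 0}\tfrac{(-t)^k}{k!}A_G^k f_0$ converges in $\ell^\infty$-norm, hence pointwise, so
\[
\bigl[e^{-tA_G}f_0\bigr](u)=\sum_{k=0}^\infty\dfrac{(-t)^k}{k!}\bigl[A_G^k f_0\bigr](u)=0
\]
for $u\notin N_\infty^+[S]$.

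I do not foresee a substantive obstacle: the only point requiring genuine care is the invariance step, where one must combine the locality characterization of $A_G$ with the forwardness condition of Proposition~\ref{p:forwardness} and the transitivity of the relation ``there is a directed walk from $\cdot$ to $\cdot$''. The boundedness of degrees in $\mathcal{G}$ is essential because it makes the defining sum for $[A_G g](u)$ finite, so no separate absolute-convergence argument is needed at the level of single powers of $A_G$.
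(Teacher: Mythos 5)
Your proof is correct, but it takes a somewhat different route from the one the paper implies. The paper presents the corollary as an immediate superposition consequence of Proposition~\ref{p:forwardness}: decompose $f_0=\sum_{w\in S}f_0(w)\1_w$, note that $u\notin N^+_\infty[S]$ means $u\notin N^+_\infty[w]$ for every $w\in S$, and apply the proposition (which already contains the inductive vanishing of $[A_G^k\1_w](u)$ and the Taylor-expansion step) term by term. You instead prove an invariant-subspace statement, that functions supported in $N^+_\infty[S]$ are mapped by $A_G$ to functions supported in $N^+_\infty[S]$, working directly at the level of the entries $a_{uv}$ via Remark~\ref{r:locality} and the entrywise condition of Proposition~\ref{p:forwardness}, plus transitivity of directed reachability. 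What your approach buys is that you never have to interchange a possibly infinite sum over $S$ with $A_G^k$ or with the exponential series; the only limit you take is the norm-convergent exponential series, which you correctly pass to pointwise. The one thing to flag is that you invoke Remark~\ref{r:locality}, hence the Locality axiom, which is not listed among the corollary's hypotheses; this is harmless in context (the identity $[A_Gg](u)=\sum_v a_{uv}\,g(v)$ is used implicitly throughout the paper, and all operators under consideration satisfy Locality), but strictly speaking the superposition route via Proposition~\ref{p:forwardness} avoids that extra hypothesis, at the price of justifying the interchange of limits when $S$ is infinite.
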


Let us now focus on the relationship between edge lengths and motion speed. Instead of introducing an external velocity field \cite{chapman2011}, we aim to leverage the network topology and edge lengths to define the transport speed. In the context of the continuous advection equation on the real line, the solution at time $t$ involves translating the initial data by $t$, producing a constant-speed motion. While motion along an oriented graph is generally more complex, we can introduce a similar concept of translation by restricting our focus to oriented trees. Later, in Section~\ref{sec:extension}, we will discuss how to extend this approach to more general classes of graphs.

\begin{definition}
\label{d:signed_distance}
Given an oriented tree $T=(V,E,\omega)$ (Definition~\ref{def:oriented-tree}), there exists a \emph{potential} function $\phi:V\to\mathbb{R}$ such that $d_{vu}=\phi_T(u)-\phi_T(v)$ for any $(v,u)\in E$. Such function~$\phi$ is unique up to a constant, so that we can generalize the edge-length notation and define a \emph{signed distance} for any $u,v\in V$ as
\[
d_{vu}=\phi_T(u)-\phi_T(v).
\]
\end{definition}

\begin{definition}
Let $T=(V,E,\omega)\in\mathcal{G}$ be an oriented tree with a node $v$ and a function $f\in\ell^1$, we define the following quantity:
\[
\bar{d}_v(f)=\sum_{u\in V}d_{vu}f(u),
\]
where $d_{vu}$ is the \emph{signed distance} from $v$ to $u$ (Definition~\ref{d:signed_distance}). In the case where $f$ is nonnegative and $\|f\|_1=1$, we have that $\bar d_v(f)$ represents the \emph{average displacement} of $f$ from $v$.
\end{definition}

In stating the second axiom of Advection, we consider a unit mass concentrated at a node $v$ of an oriented tree. During advective motion, the mass may divide along the branches. However, similarly to translation in the continuous case, we expect that after a time $t$, the average displacement of the mass from the origin $v$ is precisely $t$. Of course, we have to assume that the tree is \emph{leafless}, i.e., every node has at least a successor, to avoid any border effect.
\begin{axiom}[Advection II]\label{axiom:advection-II}
We say that $A$ satisfies the axiom of \emph{Advection II} if, for any leafless oriented tree $T\in\mathcal{G}$ and any node $v$, setting $f_t=e^{-tA_T}\1_v$ we have
\[
\bar{d}_v\bigl(f_t)=t
\]
for any $t\ge 0$.
\end{axiom}
\begin{proposition}
\label{p:advection}
Let $A$ be an advection operator that satisfies the axioms of Mass Transfer (Axioms~\ref{axiom:mass-transfer-I},\ref{axiom:mass-transfer-II}) and Advection I (Axiom~\ref{axiom:advection-I}). Then $A$ satisfies the axiom of Advection II~(Axiom~\ref{axiom:advection-II}) if and only if
\begin{equation}
\label{e:advection_proof}
\sum_{u\ne v}d_{vu}\,a_{uv}=-1
\end{equation}
for any node $v$ of a leafless oriented tree $T\in\mathcal{G}$, where $d_{uv}$ is the signed distance (see Definition~\ref{d:signed_distance}).
\end{proposition}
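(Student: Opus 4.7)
The plan is to differentiate $\bar{d}_v(f_t)=\sum_u d_{vu}f_t(u)$ and reduce Advection~II to a pointwise identity on the generator $A_T$. Observing that $f_0=\1_v$ gives $\bar{d}_v(f_0)=d_{vv}=0$, the axiom $\bar{d}_v(f_t)=t$ is equivalent to $\frac{\de}{\de t}\bar{d}_v(f_t)=1$ for every $t\ge 0$. Formally interchanging derivative and summation, this becomes
\[
-\sum_{u\in V} d_{vu}\,[A_T f_t](u)=1 \qquad \forall\, t\ge 0.
\]

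For the implication ($\Rightarrow$) I evaluate this identity at $t=0$, where $[A_T f_0](u)=[A_T\1_v](u)=a_{uv}$. Since $d_{vv}=0$, the condition collapses to $-\sum_{u\ne v}d_{vu}a_{uv}=1$, which is exactly \eqref{e:advection_proof}.

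For the converse ($\Leftarrow$) I expand $[A_T f_t](u)=\sum_w a_{uw}f_t(w)$ and swap the order of summation:
\[
\sum_u d_{vu}\,[A_T f_t](u)=\sum_{w} f_t(w)\sum_u d_{vu}\,a_{uw}.
\]
The key tree-theoretic step is the additivity of the potential (Definition~\ref{d:signed_distance}): for every triple $v,w,u$ in $T$ one has $d_{vu}=\phi_T(u)-\phi_T(v)=d_{vw}+d_{wu}$. Substituting, the inner sum splits as $d_{vw}\sum_u a_{uw}+\sum_u d_{wu}a_{uw}$; the first piece vanishes by Mass~Transfer~II (Proposition~\ref{p:mass_conservation}), and the second piece equals $-1$ by the hypothesis \eqref{e:advection_proof} applied at the node $w$. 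Therefore $\sum_u d_{vu}\,[A_T f_t](u)=-\sum_w f_t(w)=-\lVert f_t\rVert_1=-1$, where the last equality uses $f_t\ge 0$ (Mass~Transfer~I) and $\lVert f_t\rVert_1=\lVert\1_v\rVert_1=1$ (Mass~Transfer~II). This shows $\frac{\de}{\de t}\bar{d}_v(f_t)=1$ and, together with $\bar{d}_v(f_0)=0$, yields Advection~II.

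The main obstacle is analytic rather than algebraic: on an infinite tree the signed distance $d_{vu}$ is unbounded, so one must justify the interchange of $\frac{\de}{\de t}$ with the infinite sum defining $\bar{d}_v$, and the Fubini-style rearrangement of the double series in $w$ and $u$. I would handle both by exploiting Corollary~\ref{c:forwardness} to confine the sums to $N^+_\infty[v]$, and by using the $\ell^1$-boundedness of $A_T$ established in Lemma~\ref{l:mass_conservation} to dominate the power-series expansion $f_t=\sum_{k\ge 0}\frac{(-t)^k}{k!}A_T^k\1_v$ term by term. The very hypothesis that $\bar{d}_v(f_t)$ is a well-defined real number (implicit in the statement of Axiom~\ref{axiom:advection-II}) then provides the dominating summability needed to legitimize each interchange.
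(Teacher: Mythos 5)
Your algebraic core coincides with the paper's. The identity you use in the converse,
\[
\sum_{u}d_{vu}\,a_{uw}\;=\;d_{vw}\sum_{u}a_{uw}+\sum_{u}d_{wu}\,a_{uw}\;=\;0+(-1),
\]
obtained from the additivity of the potential, the column-sum condition of Proposition~\ref{p:mass_conservation}, and hypothesis~\eqref{e:advection_proof} applied at $w$, is precisely the content of the paper's Lemma~\ref{l:advection}, which records it as $\bar d_v\bigl(A_Tf\bigr)=-\sum_w f(w)$ and deduces $\bar d_v\bigl(A_T^kf\bigr)=0$ for $k\ge2$. You package this as the ODE $\frac{\de}{\de t}\bar d_v(f_t)=1$ with $\bar d_v(f_0)=0$, whereas the paper expands $e^{-tA_T}\1_v$ as a power series and kills every term of order $\ge 2$; these are equivalent reorganizations of the same computation.

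The genuine gap is in the analytic justification, which is where most of the paper's effort actually goes. For the converse you must interchange the time derivative (equivalently, the series $\sum_k\frac{(-t)^k}{k!}A_T^k\1_v$) and the Fubini rearrangement with the unbounded-weight sum $\sum_u d_{vu}(\cdot)$, and the domination you propose --- the $\ell^1$-boundedness of $A_T$ from Lemma~\ref{l:mass_conservation} --- is not sufficient: $\lVert A_T^k\1_v\rVert_1\le\lVert A_T\rVert_1^k$ controls the mass but not the first moment $\bar d_v\bigl(|A_T^k\1_v|\bigr)$, because $d_{vu}$ is unbounded on an infinite tree. What is needed is the recursive estimate $\bar d_v\bigl(|A_Tf|\bigr)\le\lVert f\rVert_1+2\lVert A_T\rVert_1\,\bar d_v\bigl(|f|\bigr)$ of Lemma~\ref{l:advection}, whose proof is not a consequence of $\ell^1$-boundedness alone: it re-uses hypothesis~\eqref{e:advection_proof} and the potential additivity to bound $\sum_{u\ne w}d_{vu}|a_{uw}|$ by $1+2d_{vw}\lVert A_T\rVert_1$. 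Likewise, in the forward direction, mere finiteness of $\bar d_v(f_t)$ for each $t$ does not legitimize termwise differentiation at $t=0$; since the relevant terms are nonnegative, Fatou only gives the inequality $-\sum_{u\ne v}d_{vu}a_{uv}\le 1$, and to obtain equality the paper resorts to Varga's trick, rewriting the exponential through $\alpha_T\id-A_T$, whose powers have nonnegative entries, so that Tonelli permits rearranging the double series into a power series in $t$ whose first-order coefficient can be read off. You should either import these two estimates explicitly or reprove them; as written, the claim that the well-definedness of $\bar d_v(f_t)$ ``provides the dominating summability'' does not close either interchange.
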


We will need the following technical lemma.
\begin{lemma}
\label{l:advection}
Let $A$ be an advection operator that satisfies~\eqref{e:advection_proof} and the axioms of Mass Transfer (Axioms~\ref{axiom:mass-transfer-I} and~\ref{axiom:mass-transfer-II}) and Advection I~(Axiom~\ref{axiom:advection-I}). Let $T$ be an oriented tree of $\mathcal G$ and let $\mathcal F$ be the set of $f\in \ell^1$ with support in $N^+_\infty[v]$ and such that $\bar{d}_v\bigl(|f|\bigr)<\infty$. Then, there exists a constant $\beta_T$ such that, for any $f\in \mathcal F$ and for any $k\ge 0$, it holds
\[
\bar d_v\bigl(|A_T^k f|\bigr)\le\beta_T^k\,\Bigl(k\|f\|_1 +\bar d_v\bigl(|f|\bigr)\Bigr).
\]
Moreover, we have
\[
\bar d_v\bigl(A_T^k f\bigr)=
\begin{cases}
\displaystyle-\sum_{w\in N^+_\infty[v]}f(w) & \text{\upshape if $k=1$}\\
0 & \text{\upshape if $k\ge2$.}
\end{cases}
\]
\end{lemma}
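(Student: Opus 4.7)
The plan is to first establish a one-step identity and estimate for $\sum_u d_{vu}\,a_{uw}$ at nodes $w\in N^+_\infty[v]$, then induct on $k$ for the bound, and finally iterate the $k=1$ identity to obtain the exact formula.

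For the one-step calculation, Locality together with Proposition~\ref{p:forwardness} confines the sum $\sum_u d_{vu}\,a_{uw}$ to $u\in\{w\}\cup N^+(w)$. Splitting $d_{vu}=d_{vw}+d_{wu}$ via the potential of Definition~\ref{d:signed_distance}, the $d_{vw}$-contribution vanishes by Mass Transfer~II, leaving $\sum_{u\ne w}d_{wu}\,a_{uw}$. Provided $w$ has at least one successor in $T$, I would extend $T$ to a leafless oriented tree $\tilde T\in\mathcal G$ that preserves the subgraph induced on $N[w]$---simply grow new descendants outside $N[w]$ from every forward leaf---so that Locality gives $a_{uw}^T=a_{uw}^{\tilde T}$, and \eqref{e:advection_proof} applied to $\tilde T$ returns the value $-1$. (If $w$ has no successor, Advection~I and Mass Transfer~II force $A_T\mathds{1}_w\equiv 0$ and both sums vanish trivially.) Sign information from Mass Transfer~I together with $0\le a_{ww}\le\|A_T\|_\infty$ then delivers
\[
\sum_u d_{vu}\,|a_{uw}|\le 2\|A_T\|_\infty\,d_{vw}+1.
\]

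For the bound, pick $\beta_T=\max(2\|A_T\|_\infty,1)$ and induct on $k$. The case $k=0$ is immediate. At the inductive step, set $g=A_T^{k-1}f$; Corollary~\ref{c:forwardness} together with the previous inductive bound ensures $g\in\mathcal F$. The one-step estimate furnishes the absolute convergence needed to swap summations:
\[
\bar d_v(|A_Tg|)\le\sum_w|g(w)|\sum_u d_{vu}\,|a_{uw}|\le 2\|A_T\|_\infty\,\bar d_v(|g|)+\|g\|_1.
\]
Combined with $\|g\|_1\le(2\|A_T\|_\infty)^{k-1}\|f\|_1$ from Lemma~\ref{l:mass_conservation} and the inductive hypothesis, this closes the induction using $\beta_T\ge 1$.

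For the exact formula, the $k=1$ case is the one-step identity itself, with sums exchanged via the absolute-convergence bound and the leafless hypothesis on $T$ (so that every $w\in N^+_\infty[v]$ contributes $-1$). For $k\ge 2$, let $g=A_T^{k-1}f\in\mathcal F\cap\ell^1$; the $k=1$ identity yields $\bar d_v(A_T^k f)=-\sum_w[A_T^{k-1}f](w)$, which vanishes by Lemma~\ref{l:mass_conservation}. The main obstacle is twofold: carefully verifying absolute convergence whenever summations are swapped, and the Locality-based reduction that promotes~\eqref{e:advection_proof} from its stated leafless context to an arbitrary $w$ with at least one successor. The remaining algebraic manipulations are then routine.
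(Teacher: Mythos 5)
Your proof is correct and follows essentially the same route as the paper's: the same one-step computation splitting $d_{vu}=d_{vw}+d_{wu}$ and invoking the zero-column-sum property together with~\eqref{e:advection_proof}, the same absolute-convergence estimate to justify exchanging the order of summation, the same induction for the bound, and the same iteration via Lemma~\ref{l:mass_conservation} to kill the terms with $k\ge2$. Your explicit Locality-based embedding of $N[w]$ into a leafless tree so that~\eqref{e:advection_proof} applies at nodes of a possibly non-leafless $T$ is a detail the paper leaves implicit, but it is the same underlying argument rather than a different approach.
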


\begin{proof}
We have $f(w)=0$ for $w\not\in N^+_\infty[v]$ and, by Corollary~\ref{c:forwardness}, $[A_Tf](u)=0$ for $u\not\in N^+_\infty[v]$, and thus
\[
\begin{split}
\bar d_v\bigl(|A_T f\bigr|)&\le\sum_{u, w\in N^+_\infty[v]}d_{vu}\bigl|a_{uw}f(w)\bigr|\\[-1ex]&=\sum_{w}|f(w)|\Biggl(\!\underbrace{d_{vw}a_{ww}}_{\ge0}-\sum_{u\ne w}\underbrace{d_{vu}a_{uw}}_{\le 0}\!\Biggr).
\end{split}
\]
Because of Proposition~\ref{p:mass_conservation}, we can rewrite the term inside the parentheses as
\[
\begin{split}
-d_{vw}\sum_{u\ne w}a_{uw}-\sum_{u\ne w}d_{vu}a_{uw}
=\sum_{u\ne w}\bigl[\,\underbrace{(d_{vw}-d_{vu})}_{=-d_{wu}}a_{uw}-2d_{vw}a_{uw}\bigr]
\end{split}
\]
which, thanks to Proposition~\ref{p:forwardness} and Equations~\eqref{e:mass_conservation_proof_abs_conv} and~\eqref{e:advection_proof}, is equal to
\[
1+2d_{vw}\sum_{u\ne w}\bigl|a_{uw}\bigr|\le 1+2d_{vw}\lVert A_T\rVert_1,
\]
and therefore
\begin{equation}
\label{e:advection_proof_inner_bound}
\bar d_v\bigl(|A_Tf|\bigr)\le
\lVert f\rVert_1 + 2\lVert A_T\rVert_1\,\bar d_v\bigl(|f|\bigr)\le
\beta_T\Bigl(\|f\|_1 +\bar d_v\bigl(|f|\bigr)\Bigr),
\end{equation}
where $\beta_T=\max\bigl\{1,\, 2\lVert A_T\rVert_1\bigr\}$.
This also implies that the series
\begin{equation*}
\bar d_v\bigl(A_T f\bigr)=
\sum_{u\in N^+_\infty[v]}d_{vu}\sum_{w\in N^+_\infty[v]}a_{uw}f(w).
\end{equation*}
converges absolutely. Therefore we can change the order of summation and obtain
\begin{equation}
\label{e:advection_proof_order_1}
\bar d_v\bigl(A_T f\bigr)=\sum_{w\in N^+_\infty[v]}f(w)\left(\,\sum_{u\in N^+_\infty[v]}(d_{vw}+d_{wu})\,a_{uw}\!\right)
=-\sum_{w\in N^+_\infty[v]}f(w).
\end{equation}
From~\eqref{e:advection_proof_inner_bound} and Corollary~\ref{c:forwardness}, it follows that $A_Tf\in \mathcal F$, hence we can substitute $f$ with $A_Tf$ in~\eqref{e:advection_proof_order_1} and, thanks to Lemma~\ref{l:mass_conservation}, obtain
\[
\bar d_v \bigl(A_T^2f\bigr)=-\sum_{w\in N^+_\infty[v]}[A_Tf](w)
=0.
\]
More generally, for $k\ge 0$ we can prove by induction that $\bar d_v\bigl(|A_T^kf|\bigr)\le\beta_T^k\bigl(k\|f\|_1+
\bar d_v(|f|)\bigr)<\infty$ and thus
\[
\bar d_v\bigl(A_T^{2+k}f \bigr)=\bar d_v\Bigl(A_T^2\bigl(A_T^k f\bigr)\Bigr)=0.\qedhere
\]
\end{proof}

\begin{proof}[Proof of Proposition~\ref{p:advection}]
Let us assume that $A$ satisfies the axioms of Mass Transfer and Advection~I. We rely again on Varga's trick: setting $\alpha_T=\lVert A_T\rVert_\infty$, we write
\[
\begin{split}
te^{\alpha_Tt}&=
\bar{d}_v\left(e^{t(\alpha_T\id-A_T)}\1_v\right)
=\sum_{u\in N^+_\infty[v]}d_{vu}\sum_{k=0}^\infty\,\underbrace{\dfrac{t^k}{k!}\bigl[(\alpha_T\id-A_T)^k\1_v\bigr](u)}_{\ge0}
\\&=-t\cdot\!\!\!\!\!\sum_{u\in N^+_\infty[v]}d_{vu}[A_T\1_v](u)+o(t).
\end{split}
\]
The first order expansion in $t$, together with Proposition~\ref{p:forwardness}, concludes the first part of the proof.

On the other hand, let us assume that $A$ satisfies~\eqref{e:advection_proof}. We can write
\[
\bar d_v\bigl(e^{-tA_T}\1_v\bigr)
=-t\underbrace{\bar d_v\bigl(A_T\1_v\bigr)}_{=-1}\,+\,\,{\bar d_v}\left(\sum_{k=2}^\infty\dfrac{(-t)^k}{k!}A_T^k\1_v\right)\!,
\]
so that it only remains to prove that the last term vanishes. Lemma~\ref{l:advection} lets us prove that it converges absolutely:
\[
\sum_{u\in V}\sum_{k=0}^\infty\left|\dfrac{(-t)^k}{k!}d_{vu}\,\bigl[A_T^k\1_v\bigr](u)\right|
=\sum_{k=0}^\infty\dfrac{t^k}{k!}\bar d_v\bigl(|A_T^k\1_v|\bigr)<\infty
\]
and we obtain
\[
\bar d_v\left(\sum_{k=2}^\infty\dfrac{(-t)^k}{k!}A_T^k\1_v\right)=\sum_{k=2}^\infty\dfrac{(-t)^k}{k!} \underbrace{\bar d_v\bigl(A_T^k\1_v\bigr)}_{=0}=0,
\]
which concludes the proof.
\end{proof}

\begin{remark}
\label{r:extension_coherent_distances}
Reducing ourselves to the case of trees without leaves may seem reductive, however the class of graphs in $\mathcal{G}$ for which this result makes sense is broader. It can be proved that if~\eqref{e:advection_proof} holds, then Advection II is indeed satisfied for all oriented graphs in $\mathcal G$ such that every node has at least a successor and for which there exists a potential $\phi$ like in Definition~\ref{d:signed_distance}.
\end{remark}

We can now go back to analyzing the proposals of advection operators that we observed in Example~\ref{example:four-advections}, in particular we can assess which of the new axioms introduced here are compatible with the proposed realization of the advection operator.

\begin{figure}
\begin{subfigure}[b]{.47\textwidth}
\centering
\begin{tikzpicture}
\tikzgraphsettings
\begin{scope}[xshift=-1cm]
\def\n{5}
\foreach \i in {1,...,\n} {
	\ifnum\i=1
		\node (\i) at ({(\i-1)*360/\n}:1) {};
	\else
		\node (\i) [label={[label distance=-1]{(\i-1)*360/\n}:$\number\numexpr\i-\n-1$}] at ({(\i-1)*360/\n}:1) {};
	\fi
}
\foreach \i in {1,...,\n} {
	\ifnum\i=\n \pgfmathtruncatemacro\next{1}
	\else\pgfmathtruncatemacro\next{\i+1}\fi
	\draw (\i) [->] to (\next);
}
\end{scope}
\begin{scope}[xshift=1cm, xscale=-1]
\def\n{9}
\foreach \i in {1,...,\n} {
	\node (-\i) [label={[label distance=2]{180-(\i-1)*360/\n}:$\number\numexpr\i-1$}]  at ({(\i-1)*360/\n}:1) {};
}
\foreach \i in {1,...,\n} {
	\ifnum\i=\n \pgfmathtruncatemacro\next{-1}
	\else\pgfmathtruncatemacro\next{-\i-1}\fi
	\draw (-\i) [->] to (\next);
}
\end{scope}
\end{tikzpicture}
\caption{A graph of $\mathcal G$, composed of two oriented cycles, with a node in common. The left cycle has edges of length $1/5$, while the edges of the right one are of length $1/9$.}
\label{f:advection}
\end{subfigure}\hfill
\begin{subfigure}[b]{.47\textwidth}
\centering\footnotesize
\begin{tikzpicture}
\begin{axis}[
    xlabel={$u$},
    ylabel={},
	xtick = {-4,-2,0,2,4,6,8},
    ytick = {0,0.07692307692306885,0.15},
    width=\textwidth,
    height=2.7cm,
    ymin=0,
    ymax=0.16,
    yticklabel style={
        /pgf/number format/fixed,
    },
]
\addplot [color=black, mark=*] coordinates {
(-4, 0.07692307692306885)
(-3, 0.07692307692306886)
(-2, 0.07692307692306885)
(-1, 0.07692307692306885)
(0, 0.07692307692306885)
(1, 0.07692307692306882)
(2, 0.07692307692306885)
(3, 0.07692307692306884)
(4, 0.07692307692306884)
(5, 0.07692307692306882)
(6, 0.07692307692306882)
(7, 0.07692307692306882)
(8, 0.07692307692306882)
};
\end{axis}
\end{tikzpicture}
\begin{tikzpicture}
\begin{axis}[
    xlabel={$u$},
	xtick = {-4,-2,0,2,4,6,8},
    ytick = {0,0.07,0.15},
    width=\textwidth,
    height=2.7cm,
    ymin=0,
    ymax=0.16,
    yticklabel style={
        /pgf/number format/fixed,
    },
]
\addplot [color=black, mark=*] coordinates {
(-4, 0.07142857142856822)
(-3, 0.07142857142856823)
(-2, 0.07142857142856822)
(-1, 0.07142857142856825)
(0, 0.14285714285713644)
(1, 0.07142857142856822)
(2, 0.0714285714285682)
(3, 0.0714285714285682)
(4, 0.07142857142856822)
(5, 0.07142857142856822)
(6, 0.0714285714285682)
(7, 0.0714285714285682)
(8, 0.07142857142856819)
};
\end{axis}
\end{tikzpicture}
\caption{Plot of $f_t(u)$ for $t=100$ with initial unitary mass at node $0$, according to $A^{(3)}$ (above) and $A^{(4)}$ (below).}
\label{f:advection_resonance}
\end{subfigure}
\caption{Long-term comparison of the advection process on the graph in~(\subref{f:advection}), according to operators $A^{(3)}$ and $A^{(4)}$. The latter, which satisfies Advection II, exhibits resonance behavior.}
\end{figure}
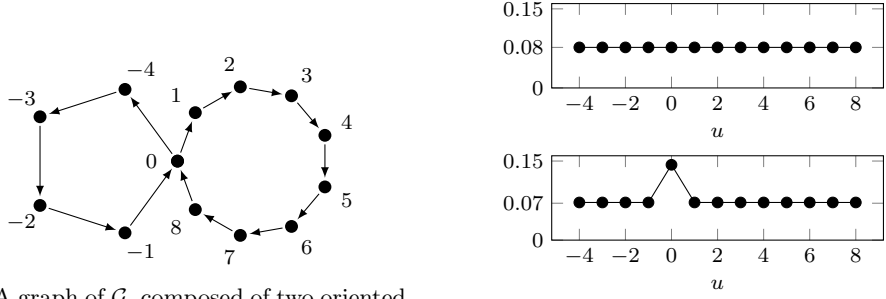

\begin{example}[continued from  Example~\ref{example:four-advections}]
It is straightforward to verify that both the operators $A^{(3)}$ and $A^{(4)}$ of~\eqref{e:A1-4} satisfy the axiom of Advection~I, while only $A^{(4)}$ also satisfies Advection~II.

Let us apply those operators to the graph depicted in Figure~\ref{f:advection}, which comprises two connected cycles, with an initial unitary mass located at the common node $0$. Since both cycles have a total length of $1$, we anticipate that the mass will peak at node $0$ whenever $t$ is an integer. However, for large $t$, we only observe this behavior with $A^{(4)}$, which satisfies the axiom of Advection II. With $A^{(3)}$, on the other hand, the mass flows along the two cycles without perfect synchronization, ultimately converging to a uniform distribution; see Figure~\ref{f:advection_resonance}.
\end{example}

\subsection{Flow through a node}
\label{s:flow}

The axioms introduced so far are sufficient to prove an explicit relation between the \emph{time integration} of the quantity of mass on a node~$u$ up to time $t$ and the \emph{spatial integration} (i.e., the sum) of the mass on the children of $u$ at time $t$. This result, stated in Proposition~\ref{p:lemma_integral}, will be useful for characterizing the next (and final) axiom, but it is insightful in its own right.

\begin{proposition}
\label{p:lemma_integral}
Let $A$ be an advection operator that satisfies the axioms of Locality (Axiom~\ref{axiom:locality}), Mass Transfer I (Axiom~\ref{axiom:mass-transfer-I}) and II (Axiom~\ref{axiom:mass-transfer-II}) and Advection I (Axiom~\ref{axiom:advection-I}.) Let $T\in\mathcal G$ be an oriented tree and let $f_0\in \ell^1$ be a nonnegative function whose domain is concentrated at $N^-_\infty[u]$ for a given node~$u$. Then for any $t\ge0$, setting $f_t=e^{-tA_T}f_0$ we have
\[
a_{uu}\int_0^t f_s(u)\,\de s=\sum_{w\in N^+_\infty[u]\setminus\{u\}} f_t(w).
\]
\end{proposition}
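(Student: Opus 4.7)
The plan is to show that both sides of the identity agree at $t=0$ and have the same derivative. Set $S = N^+_\infty[u]\setminus\{u\}$ and $M(t) = \sum_{v \in S} f_t(v)$. Since $f_0$ is supported on the single node $w \in N^-_\infty[u]$, and since in an oriented tree a node cannot be both a strict predecessor and a strict successor of $u$ without producing a directed cycle, we have $w \notin S$ and hence $M(0) = 0$; the left-hand side also vanishes at $t = 0$, so it suffices to verify $M'(t) = a_{uu} f_t(u)$ for every $t \ge 0$.

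Differentiation under the sum, which is legitimate by the $\ell^1$-boundedness of $A_T$ from Lemma~\ref{l:mass_conservation}, together with the ODE~\eqref{e:ode} yields $M'(t) = -\sum_{v \in S}[A_T f_t](v)$. Combining Locality (Axiom~\ref{axiom:locality}) with Advection~I (Proposition~\ref{p:forwardness}), I observe that the only potentially nonzero entries in column $y$ of $A_T$ are the diagonal $a_{yy}$ and the off-diagonal $a_{vy}$ with $v \in N^+(y)$, so
\[
[A_T f_t](v) = a_{vv}\,f_t(v) + \sum_{y \in N^-(v)} a_{vy}\,f_t(y).
\]
Swapping the two sums, justified by the absolute convergence bound $\sum_{v,y}|a_{vy}||f_t(y)| \le 2\|A_T\|_\infty \|f_t\|_1$ from the proof of Lemma~\ref{l:mass_conservation}, I obtain
\[
-M'(t) = \sum_{v \in S} a_{vv}\,f_t(v) + \sum_y f_t(y) \sum_{v \in N^+(y) \cap S} a_{vy}.
\]

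The heart of the argument is to evaluate the inner sum $\Sigma(y) = \sum_{v \in N^+(y) \cap S} a_{vy}$ for those $y$ with $f_t(y) \ne 0$, which by Corollary~\ref{c:forwardness} must lie in $N^+_\infty[w]$. When $y \in S$, acyclicity forces $N^+(y) \subseteq S$, and Mass Transfer~II (Proposition~\ref{p:mass_conservation}) gives $\Sigma(y) = -a_{yy}$; when $y = u$, the same axiom yields $\Sigma(u) = -a_{uu}$. The non-trivial case is $y \in N^+_\infty[w] \setminus N^+_\infty[u]$, where the key structural claim is $N^+(y) \cap S = \emptyset$. Suppose for contradiction there were $v \in N^+(y) \cap S$; then in the undirected tree $\hat T$, the two walks $w \to \cdots \to u \to \cdots \to v$ and $w \to \cdots \to y$ followed by the edge $\{y,v\}$ both go from $w$ to $v$. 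Each walk is in fact a simple path, because any repeated vertex would yield a directed cycle in $T$: a shared interior vertex $z$ of the first walk would satisfy both $z \to \cdots \to u$ and $u \to \cdots \to z$, while $v$ appearing in the prefix $w \to \cdots \to y$ of the second walk would satisfy $v \to \cdots \to y \to v$. Uniqueness of simple paths in the tree $\hat T$ then forces the two paths to coincide, but their penultimate vertices (the predecessor of $v$ along the path from $u$, which lies in $N^+_\infty[u]$, versus $y$, which does not) disagree, a contradiction.

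Assembling the three cases collapses the expression to $-M'(t) = \sum_{v \in S}a_{vv}f_t(v) - \sum_{y \in S}a_{yy}f_t(y) - a_{uu}f_t(u) = -a_{uu}f_t(u)$, so $M'(t) = a_{uu} f_t(u)$. Integrating from $0$ to $t$ and using $M(0) = 0$ completes the proof. The main obstacle is the tree-structural claim for $y \in N^+_\infty[w]\setminus N^+_\infty[u]$; once it is in hand, everything else reduces to careful bookkeeping with the column sparsity induced by the previous axioms.
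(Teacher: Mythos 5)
Your proof is correct, but it is organized differently from the paper's. The paper first proves a finite telescoping identity (Lemma~\ref{l:lemma_sum}) by induction over the shells $N_i^+(u)$, namely
\[
a_{uu}g(u) = -\sum_{i=1}^k\sum_{w\in N^+_i(u)}[A_Tg](w)+\sum_{w\in N_k^+(u)}a_{ww}g(w),
\]
then sets $g=f_s$, integrates in time, and shows the boundary term vanishes as $k\to\infty$, using the nonnegativity of $f_s$ to exchange sum and integral. You instead differentiate the asserted identity: you show directly that $M(t)=\sum_{v\in N^+_\infty[u]\setminus\{u\}}f_t(v)$ satisfies $M'(t)=a_{uu}f_t(u)$ by a single global summation by parts over the whole cone, which forces you to make explicit the tree-structural fact that no edge enters $N^+_\infty[u]\setminus\{u\}$ from a node of $N^+_\infty[w]\setminus N^+_\infty[u]$; your uniqueness-of-simple-paths argument for this is sound, and essentially the same fact is used implicitly in the paper's inductive step (where $g(r)=0$ for $r\in N^-(v)\setminus\{w\}$). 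What your route buys: no $k\to\infty$ remainder estimate, and no use of the nonnegativity of $f_0$, since your interchanges rest only on the $\ell^1$ bounds of Lemma~\ref{l:mass_conservation}, so the identity extends to signed initial data. What it costs: the boundary analysis of the cone must be carried out by hand rather than being absorbed into the shell-by-shell induction.
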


To obtain the result we need the following lemma.
\begin{lemma}
\label{l:lemma_sum}
Let $A$ and $\,T$ be as in Proposition~\ref{p:lemma_integral} and let $g\in \ell^1$ be a nonnegative function whose domain is concentrated at $N^+_\infty\bigl[N^-_\infty[u]\bigr]$, for a given node $u$. Then for any $k>0$ we have
\begin{equation}
\label{e:lemma_sum}
a_{uu}g(u) = -\sum_{i=1}^k\sum_{w\in N^+_i(u)}[A_Tg](w)+\sum_{w\in N_k^+(u)}a_{ww}g(w).
\end{equation}
\end{lemma}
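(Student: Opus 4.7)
The plan is to induct on $k \geq 1$. The algebraic skeleton rests on the expansion
\[
[A_T g](w) = a_{ww} g(w) + \sum_{v \in N^-(w)} a_{wv} g(v),
\]
valid by Locality (Axiom~\ref{axiom:locality}) and Proposition~\ref{p:forwardness} (which jointly force $a_{wv}=0$ unless $v=w$ or $v \in N^-(w)$), combined with the identity $a_{vv} = -\sum_{w \in N^+(v)} a_{wv}$, obtained from Mass Transfer~II (Proposition~\ref{p:mass_conservation}) using the same vanishing to trim the column sum to $N^+(v) \cup \{v\}$.

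The key geometric ingredient I would isolate is a tree-structure claim: if $w \in N^+_j(u)$ with $j \geq 1$ and unique predecessor $x \in N^+_{j-1}(u)$ on the directed path $u \to \dots \to x \to w$, and if $v \in N^-(w) \setminus \{x\}$, then $v \notin N^+_\infty[N^-_\infty[u]]$, whence $g(v)=0$. To prove the claim, I would note that removing $w$ from the underlying undirected tree disconnects $v$ from $u$, so any hypothetical $z \in N^-_\infty[u]$ with directed path $z \to \dots \to v$ must give rise to an undirected path crossing $w$; the final edge reaching $v$ is then the undirected edge $w$--$v$, which is actually oriented $v \to w$ in $T$, contradicting directedness. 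A symmetric check shows the directed path $z \to \dots \to u$ cannot cross $w$ either, since the unique neighbour of $w$ on the $u$-side is $x$ and the edge $x$--$w$ is oriented $x \to w$.

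Given this claim, the base case $k=1$ is immediate:
\[
-\sum_{w \in N^+_1(u)} [A_T g](w) + \sum_{w \in N^+_1(u)} a_{ww} g(w) = -\sum_{w \in N^+(u)} \sum_{v \in N^-(w)} a_{wv} g(v),
\]
and splitting the inner sum into $v=u$ and $v \ne u$ kills the second part by the claim while the first yields $-g(u)\sum_{w \in N^+(u)} a_{wu} = a_{uu} g(u)$ via the mass-conservation identity.

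For the inductive step I would subtract the identity at level $k$ from that at level $k+1$, reducing the problem to
\[
\sum_{w \in N^+_{k+1}(u)} \sum_{v \in N^-(w)} a_{wv} g(v) = \sum_{v \in N^+_k(u)} g(v) \sum_{w \in N^+(v)} a_{wv}.
\]
By the tree claim, only $v = x_k(w) \in N^+_k(u)$ survives on the left, where $x_k(w)$ denotes the unique predecessor of $w$ on the directed $u$-to-$w$ path. Uniqueness of directed paths in an oriented tree gives a bijection between $N^+_{k+1}(u)$ and $\{(v,w) : v \in N^+_k(u),\, w \in N^+(v)\}$ via $w \mapsto (x_k(w), w)$, matching the two sides term by term. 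The main obstacle is the tree-structure claim; once it is in place, the remainder is a routine reindexing together with the mass-conservation identity.
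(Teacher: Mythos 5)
Your proof is correct and follows essentially the same route as the paper's: induction on $k$, expanding $[A_Tg]$ on each layer via Locality and Advection~I, killing the contributions of the ``extra'' predecessors because $g$ vanishes there, and closing the telescoping with the column-sum identity $\sum_{w\in N^+(v)}a_{wv}=-a_{vv}$. The only difference is cosmetic: you make explicit the tree-structure claim (that predecessors off the unique $u$-path lie outside $N^+_\infty\bigl[N^-_\infty[u]\bigr]$) and the layer bijection, both of which the paper uses implicitly without proof.
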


\begin{proof}
We proceed by induction on $k$, and assume that~\eqref{e:lemma_sum} is true for $k\ge0$ (the base case $k=0$ is trivial). For any two nodes $w\in N^+_k(u)$ and $v\in N^+(w)$, by Locality  and Advection I we have:
\[
[A_Tg](v)=a_{vv}g(v)+a_{vw}g(w)+\sum_{r\in N^-(v)\setminus\{w\}}a_{vr}\underbrace{g(r)}_{=0}.
\]
Summing over $v\in N^+(w)$ and then over $w\in N^+_k(u)$, thanks to Proposition~\ref{p:mass_conservation} we obtain
\[
\sum_{v\in N^+_{k+1}(u)}[A_Tg](v)=\sum_{v\in N^+_{k+1}(u)}a_{vv}g(v)+\sum_{w\in N_k^+(u)}\underbrace{\left(\sum_{v\in N^+(w)}a_{vw}\right)}_{=-a_{ww}}g(w).
\]
We now substitute the last term in~\eqref{e:lemma_sum}, resulting in
\[
a_{uu}g(u) = -\sum_{i=1}^{k+1}\sum_{v\in N^+_i(u)}[A_Tg](v)+\sum_{v\in N_{k+1}^+(u)}a_{vv}g(v),
\]
which proves the statement for $k+1$.
\end{proof}

\begin{proof}[Proof of Proposition~\ref{p:lemma_integral}]
For any $s>0$, the domain of the function $f_s$ is concentrated at $N^+_\infty\bigl[N^-_\infty[u]\bigr]$, thanks to the axiom of Advection I (Corollary~\ref{c:forwardness}). Moreover, thanks to Mass Transfer, $f_s$ is a nonnegative $\ell^1$ function and thus we can apply Lemma~\ref{l:lemma_sum} with $g=f_s$: integrating both sides of~\eqref{e:lemma_sum} we get
\[
a_{uu}\int_0^t f_s(u)\,\de s = \sum_{i=1}^k\sum_{w\in N^+_i(u)}\int_0^t\underbrace{-A_tf_s(w)}_{=\frac{\de }{\de s}f_s(w)}\de s+\sum_{w\in N_k^+(u)}a_{ww}\int_0^tf_s(w)\,\de s.
\]
We can prove that the last term goes to $0$ as $k$ tends approaches $\infty$ by showing that the following series is convergent for any $t>0$:
\[
\sum_{k=0}^\infty\sum_{w\in N_k^+(u)}a_{ww}\int_0^tf_s(w)\,\de s
\le\int_0^t\lVert A_T\rVert_\infty\underbrace{\sum_{w\in N^+_\infty[u]}f_s(w)}_{\le\lVert f_s\rVert_1=\lVert f_0\rVert_1}\de s<\infty.
\]
(We could exchange the sum and the integral thanks to the positivity of $f_s$.)

Therefore, for $k\to\infty$ we have
\[
a_{uu}\int_0^tf_s(u)\,\de s = \lim_{k\to \infty}\sum_{i=1}^k\sum_{w\in N^+_i(u)}f_t(w)=\sum_{w\in N^+_\infty[u]\setminus\{u\}} f_t(w).\qedhere
\]
\end{proof}

\subsection{How does the mass split?}

In order to fully characterize the advection operator, we must clarify how the mass splits whenever it encounters a branching. For simplicity, we can continue to focus on oriented trees.

Different intuitions about how the mass splits are feasible, depending on the phenomenon we are modeling. For instance, one may expect that the mass flowing through a node $v$ splits evenly among $v$'s children. Alternatively, the splitting may depend on the distance from $v$ to each of his children. More generally, there might be a priori knowledge on the edges of the graph that suggests a different splitting strategy. For example, in road networks, the splitting may favor larger roads.

Our approach links the speed of mass transfer to the length of the edges. As an example, consider an advection process on oriented tree $T$, with initial mass concentrated at a node $v$. Assume that $v$ is connected to two leaves $u$ and $w$ such that  $d_{vu}=2d_{vw}$, as shown in Figure~\ref{f:simple_tree}. After a time $t$, in a constant-speed motion we expect the mass that has reached $u$ to be half of the mass that has reached $w$, since it has to travel twice the distance. For $t\to\infty$, we anticipate that the mass will have completely left $v$, distributing among $u$ and $w$ in ratio of $1:2$ (Figure~\ref{f:splitting_example}).

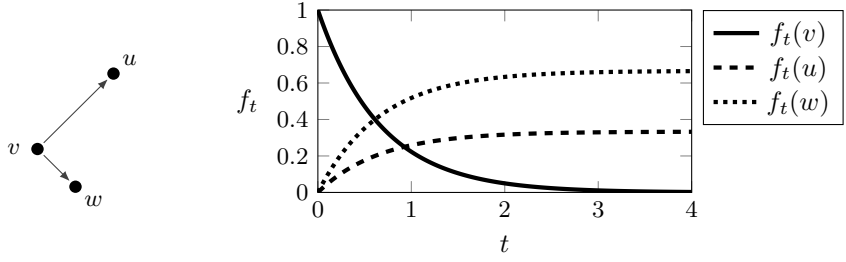
\begin{figure}
\centering
\begin{subfigure}[b]{.19\columnwidth}
\centering
\begin{tikzpicture}
\Vertex[x=0,y=0, label={$v$}, color=mplblue, position=left, distance=0pt]{v}
\Vertex[x=1,y=1, label={$u$}, color=mplorange, position=north east, distance=-3pt]{u}
\Vertex[x=.5,y=-.5, label={$w$}, color=mplgreen, position=south east, distance=-3pt]{w}
\Edge[Direct](v)(u)
\Edge[Direct](v)(w)
\end{tikzpicture}
\vspace{4ex}
\subcaption{A tree where $d_{vu}=2d_{vw}$.}
\label{f:simple_tree}
\end{subfigure}
\quad
\begin{subfigure}[b]{.65\columnwidth}
\centering
\begin{tikzpicture}
\begin{axis}[
    xlabel={$t$},
    ylabel={$f_t$},
    ylabel style = {rotate = -90},
    width=6.5cm,
    height=4cm,
    ymin=0,
    ymax=1,
    xmin=0,
    xmax=4,
    xticklabel style={
        font=\small,
    },
    yticklabel style={
        font=\small,
        /pgf/number format/fixed,
    },
    legend pos = outer north east,
]
\addplot [ultra thick, color=mplblue] table {figures/splitting_v.txt};
\addplot [ultra thick, dashed, color=mplorange] table {figures/splitting_u.txt};
\addplot [ultra thick, dotted, color=mplgreen] table {figures/splitting_w.txt};
\addlegendentry{$f_t(v)$}
\addlegendentry{$f_t(u)$}
\addlegendentry{$f_t(w)$}
\end{axis}
\end{tikzpicture}
\subcaption{Time evolution of $f_t(v)$, $f_t(u)$ and $f_t(w)$.\\\vphantom{M}}
\label{f:splitting_example}
\end{subfigure}
\caption{Example of the advection process on a simple tree, with initial unit mass on~$v$, according to the operator $A^{(4)}$.}
\label{f:spitting_example_on_simple_tree}
\end{figure}

Similarly, in the case of a general tree $T\in\mathcal{G}$, we expect that the total mass in each subtree of $v$'s children will be inversely proportional to the length of the corresponding edge from $v$, after an infinite amount of time. Therefore, we establish the following axiom.

\begin{axiom}[Splitting]\label{axiom:splitting}
We say that $A$ satisfies the axiom of \emph{Splitting} if, given an oriented tree $T\in\mathcal{G}$ and a node $v$, setting $f_t=e^{-tA_T}\1_v$ it holds
\[
\lim_{t\to\infty}\dfrac{\displaystyle\sum_{z\in N^+_\infty[u]}f_t(z)}{\displaystyle\sum_{z\in N^+_\infty[w]}f_t(z)}
=
\dfrac{d_{vw}}{d_{vu}}
\]
for any $u,w\in N^+(v)$.
\end{axiom}

Thanks to the results in Section~\ref{s:flow}, we can compute the total mass lying on each branch of~$v$ after an infinite amount of time in terms of the coefficients of the column relative to $v$ in the advection operator.

\begin{proposition}
\label{p:splitting}
Let $A$ be an advection operator that satisfies the axioms of Locality (Axiom~\ref{axiom:locality}), Mass Transfer I and II (Axioms~\ref{axiom:mass-transfer-I}, \ref{axiom:mass-transfer-II}) and Advection~I (Axiom~\ref{axiom:advection-I}), and let $T\in\mathcal G$ be an oriented tree. Then, given a node $v$, for any $u\in N^+(v)$ and for $f_t=e^{-tA_T}\1_v$, we have
\[
\lim_{t\to\infty}\sum_{w\in N^+_\infty[u]}f_t(w)=-\dfrac{a_{uv}}{a_{vv}}.
\]
\end{proposition}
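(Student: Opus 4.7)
The plan is to combine Proposition~\ref{p:lemma_integral} with an explicit solution of the ODE at just the two nodes $v$ and $u$. Applying Proposition~\ref{p:lemma_integral} with $f_0=\1_v$ and the node $u\in N^+(v)$ yields
\[
a_{uu}\int_0^t f_s(u)\,\de s = \sum_{w\in N^+_\infty[u]\setminus\{u\}} f_t(w);
\]
adding $f_t(u)$ to both sides reconstructs the quantity $\sum_{w\in N^+_\infty[u]} f_t(w)$ whose limit is sought.

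The first step is to exploit the tree structure, together with Locality and Advection~I, to decouple the evolution equation at $v$ and at $u$. By Corollary~\ref{c:forwardness}, $f_t$ is supported in $N^+_\infty[v]$; moreover, in an oriented tree, any $w\in N^-(u)\setminus\{v\}$ fails to lie in $N^+_\infty[v]$, since the unique undirected path from $v$ to such a $w$ would have to traverse $u$ in the wrong direction. Combining this observation with Proposition~\ref{p:forwardness} (which kills terms $a_{vw}$ when $w\in N^+(v)$) shows that $[A_T f_t](v) = a_{vv} f_t(v)$ and $[A_T f_t](u) = a_{uu} f_t(u) + a_{uv} f_t(v)$. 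Hence $f_t(v)=e^{-a_{vv} t}$, and $f_t(u)$ satisfies a scalar linear ODE with source $-a_{uv}e^{-a_{vv} t}$ and initial value $0$.

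Integrating this ODE from $0$ to $t$ produces the closed-form identity
\[
f_t(u) + a_{uu}\int_0^t f_s(u)\,\de s = -\frac{a_{uv}}{a_{vv}}\bigl(1-e^{-a_{vv} t}\bigr),
\]
valid whenever $a_{vv}>0$; the degenerate case $a_{vv}=0$ forces $a_{uv}=0$ via the column-sum constraint of Proposition~\ref{p:mass_conservation}, so both sides and the conclusion reduce trivially to $0$. By Proposition~\ref{p:lemma_integral}, the left-hand side equals $\sum_{w\in N^+_\infty[u]}f_t(w)$, and letting $t\to\infty$ yields the claimed limit $-a_{uv}/a_{vv}$. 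I expect the main obstacle to be the combinatorial verification that, in a tree, the only nonzero contributions to the ODE at $u$ come from $u$ itself and from $v$; once this decoupling is in hand, the rest is a one-line integration plus invocation of the already-proven Proposition~\ref{p:lemma_integral}.
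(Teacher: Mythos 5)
Your proof is correct, and while it rests on the same two pillars as the paper's argument --- Proposition~\ref{p:lemma_integral} together with the observation that, on an oriented tree, Locality and Advection~I make the evolution at the pair $\{v,u\}$ lower triangular ($\tfrac{\de}{\de t}f_t(v)=-a_{vv}f_t(v)$ and $\tfrac{\de}{\de t}f_t(u)=-a_{uu}f_t(u)-a_{uv}f_t(v)$) --- the way you extract the closed form is genuinely different and, in fact, cleaner. The paper computes $[A_T^k\1_v](v)=a_{vv}^k$ and $[A_T^k\1_v](u)$ by induction, sums the exponential series to get the explicit solution~\eqref{e:exact_formula}, and must split into the cases $a_{uu}=a_{vv}$ and $a_{uu}\ne a_{vv}$ before integrating. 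You instead integrate the scalar ODE for $f_t(u)$ from $0$ to $t$, which makes the exact combination $f_t(u)+a_{uu}\int_0^t f_s(u)\,\de s$ appear on its own and equal $-a_{uv}\int_0^t e^{-a_{vv}s}\,\de s=-\tfrac{a_{uv}}{a_{vv}}\bigl(1-e^{-ta_{vv}}\bigr)$; this bypasses the explicit formula for $f_s(u)$ entirely, eliminates the case distinction on $a_{uu}$ versus $a_{vv}$, and only needs $a_{vv}>0$ (rather than $a_{uu}>0$ as well) to pass to the limit. Your combinatorial justification of the decoupling --- predecessors of $v$, and predecessors of $u$ other than $v$, lie outside $N^+_\infty[v]$ because the unique undirected tree path would traverse an edge against its orientation, while $a_{vw}=0$ for $w\in N^+(v)$ by Proposition~\ref{p:forwardness} --- is exactly the content the paper packages into its inductive recurrence, so nothing is missing there. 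Your handling of the degenerate case $a_{vv}=0$ (forced $a_{uv}=0$ by Proposition~\ref{p:mass_conservation}) addresses an edge case the paper silently ignores, although strictly speaking the stated limit $-a_{uv}/a_{vv}$ is then an indeterminate form rather than $0$, so the proposition is best read as implicitly assuming $a_{vv}\ne0$.
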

\begin{proof}
By Locality and Advection I (Corollary~\ref{c:forwardness}), for any $k\ge0$ we have
\[
\begin{cases}
\bigl[A_T^{k+1}\1_v\bigr](v)=\bigl[A_T\bigl(A_T^k\1_v\bigr)\bigr](v) = a_{vv}\bigl[A_T^k\1_v\bigr](v)\\
\bigl[A_T^{k+1}\1_v\bigr](u)=a_{uv}\bigl[A_T^k\1_v\bigr](v)+a_{uu}\bigl[A_T^k\1_v\bigr](u)
\end{cases}
\]
so it can be easily proved by induction that it holds
\[
\bigl[A_T^k\1_v\bigr](v)=a_{vv}^k,\quad
\bigl[A_T^k\1_v\bigr](u)=
\begin{cases}
a_{uv}\cdot k\,a_{vv}^{k-1} & \text{if $a_{uu}=a_{vv}$}\\[1.5ex]
a_{uv}\cdot\dfrac{a_{vv}^k-a_{uu}^k}{a_{vv}-a_{uu}} & \text{otherwise}
\end{cases}
\]
When $a_{uu}\ne a_{vv}$, we have:
\begin{equation}
\label{e:exact_formula}
f_s(u)=\sum_{k=0}^\infty\dfrac{(-s)^k}{k!}\bigl[A_T
^k\1_v\bigr](u)=\dfrac{a_{uv}}{a_{vv}-a_{uu}}\bigl(e^{-sa_{vv}}-e^{-sa_{uu}}\bigr)
\end{equation}
and therefore, from Proposition~\ref{p:lemma_integral}:
\[
\sum_{w\in N^+_\infty[u]} f_t(w)
=f_t(u)+a_{uu}\int_0^t f_s(u)\,\de s
=a_{uv}\left(\dfrac{e^{-ta_{vv}}-e^{-ta_{uu}}}{a_{vv}-a_{uu}}-\dfrac{1}{a_{vv}}\right).
\]
The conclusion follow by taking the limit for $t\to\infty$. The other case, where $a_{uu}=a_{vv}$ is similar.
\end{proof}

The following characterization follows immediately.
\begin{corollary}
\label{c:splitting}
Under the same assumptions of Proposition~\ref{p:splitting}, the advection operator $A$ satisfies the axiom of Splitting (Axiom~\ref{axiom:splitting}) if and only if
\begin{equation}
\label{e:csplitting}
\dfrac{a_{uv}}{a_{wv}}=\dfrac{d_{vw}}{d_{vu}}
\end{equation}
for any node $v$ and any $u,w\in N^+(v)$.
\end{corollary}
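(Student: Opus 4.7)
The plan is to read off the corollary directly from Proposition~\ref{p:splitting}: that proposition already identifies the limit of the mass on each branch with an explicit algebraic expression in the entries of $A_T$, so the splitting axiom becomes an identity between ratios that can be compared coefficient by coefficient. No new dynamical analysis is needed — the entire content of the corollary sits in rewriting the limit in Axiom~\ref{axiom:splitting} using Proposition~\ref{p:splitting}.

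Concretely, I would apply Proposition~\ref{p:splitting} separately to the two successors $u,w\in N^+(v)$ to obtain
\[
\lim_{t\to\infty}\sum_{z\in N^+_\infty[u]}f_t(z) = -\frac{a_{uv}}{a_{vv}}, \qquad
\lim_{t\to\infty}\sum_{z\in N^+_\infty[w]}f_t(z) = -\frac{a_{wv}}{a_{vv}}.
\]
Both limits exist and the common factor $a_{vv}$ cancels in the quotient, so
\[
\lim_{t\to\infty}\frac{\sum_{z\in N^+_\infty[u]}f_t(z)}{\sum_{z\in N^+_\infty[w]}f_t(z)} = \frac{a_{uv}}{a_{wv}}.
\]
Comparing this with the value $d_{vw}/d_{vu}$ prescribed by Axiom~\ref{axiom:splitting} yields~\eqref{e:csplitting} in one direction, and conversely if~\eqref{e:csplitting} holds then the ratio of the two limits is exactly $d_{vw}/d_{vu}$, so the Splitting axiom is satisfied.

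The only point that requires a brief check is the degenerate case $a_{wv}=0$, where the quotient of limits would be ill-defined. Since the edge length $d_{vw}/d_{vu}$ is always strictly positive, if Splitting holds one automatically has $a_{wv}\ne 0$ for every $w\in N^+(v)$; conversely, the identity~\eqref{e:csplitting} implicitly carries the same nonvanishing assumption on its denominator (alternatively one can phrase the claim in the cross-multiplied form $d_{vu}\,a_{uv}=d_{vw}\,a_{wv}$, which is manifestly equivalent and avoids the issue). I do not expect any serious obstacle: all the analytic heavy lifting has already been absorbed into Proposition~\ref{p:lemma_integral} and Proposition~\ref{p:splitting}, and the corollary is essentially a one-line reformulation.
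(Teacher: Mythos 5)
Your proposal is correct and coincides with the paper's own treatment: the paper states that the corollary ``follows immediately'' from Proposition~\ref{p:splitting}, i.e.\ exactly the quotient-of-limits argument you spell out. Your extra remark about the degenerate case $a_{wv}=0$ (and the cross-multiplied reformulation $d_{vu}\,a_{uv}=d_{vw}\,a_{wv}$) is a sensible clarification that the paper leaves implicit.
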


\begin{example}[continued from Example~\ref{example:four-advections}]
\label{e:one_more}
Thanks to Corollary~\ref{c:splitting}, we know that both $A^{(3)}$ and $A^{(4)}$ satisfy the axiom of Splitting. Here, we introduce an example of an advection operator that satisfies all the previous axioms but follows a different splitting strategy. Specifically, we impose that the mass flowing through a node splits evenly among its children. To achieve this, guided by Proposition~\ref{p:splitting}, we set
\[
\bigl[A_G^{(5)}f\bigr](u)=\deg^+(u)\,s_u\,f(u)-\sum_{v\in N^-(u)}s_vf(v)\\[-2ex]
\]
where
\[
s_v=
\begin{cases}
\dfrac1{\displaystyle\sum_{w\in N^+(v)}d_{vw}} & \text{if $N^+(v)\ne\varnothing$}\\[6ex]
0& \text{otherwise.}
\end{cases}
\]
The matrix corresponding to $A^{(5)}$ for the graph of Figure~\ref{f:mass_transfer} is
\begin{equation}
\label{e:A5-6}
A^{(5)}_G=
\text{\small$\begin{bmatrix}
  1 & 0 & 0 & -1\\
  -1 & 2/3 & 0 & 0\\
  0 & \phantom{-}-1/3\phantom{-} & 1 & 0\\
  0 & -1/3 & -1 & 1\\
\end{bmatrix}$}\!.
\end{equation}
See Section~\ref{ss:splitting} for examples of advection on trees that illustrate the role of the axiom of Splitting.
\end{example}

As a conclusion of this section, Table~\ref{t:fulfillment} summarizes the axioms satisfied by the advection operators discussed. We have excluded $A^{(1)}$ because it fails to meet the most basic requirements, rendering nearly all propositions inapplicable.
\begin{table}
\def\Y{\textsc{y}}
\def\N{\textsc{n}}
\def\Ys{\textsc{y}\rlap{*}}
\caption{Fulfillment of axioms by advection operators.}\vspace{-.5\baselineskip}
\label{t:fulfillment}
\renewcommand\arraystretch{1.5}
\setlength\belowrulesep{0pt}
\centering
\begin{tabular}{l@{\qquad}*6c}
\toprule
Axiom & $A^{(2)}$ & $A^{(3)}$ & $A^{(4)}$ & $A^{(5)}$ \\
\midrule
Locality & \Y & \Y & \Y & \Y \\
Mass Transfer I  & \Y & \Y & \Y & \Y\\
Mass Transfer II & \N & \Y & \Y & \Y\\
Advection I & \Y & \Y & \Y & \Y\\
Advection II & \N & \N & \Y & \Y\\
Splitting & \N & \Y & \Y & \N\\
\bottomrule
\end{tabular}
\end{table}

\section{Characterization of the advection operator}
\label{sec:characterization}

\subsection{Existence and uniqueness} 

As promised at the beginning of the discussion, the axioms introduced in Section~\ref{sec:axiomatic_construction} are sufficient to uniquely characterize an advection operator on the graphs on $\mathcal G$. 

\begin{theorem}
\label{t:characterization}
There exists a unique advection operator $A\,:\mathcal{G}\to B(\ell^\infty)$ satisfying the axioms of Locality (Axiom~\ref{axiom:locality}), Mass Transfer (Axioms~\ref{axiom:mass-transfer-I}, \ref{axiom:mass-transfer-II}), Advection (Axioms~\ref{axiom:advection-I}, \ref{axiom:advection-II}) and Splitting (Axiom~\ref{axiom:splitting}).
\end{theorem}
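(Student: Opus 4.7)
The plan is to prove the theorem in two parts: existence by exhibiting a concrete candidate and verifying each of the six axioms, and uniqueness by chaining together the characterizations already obtained in Propositions~\ref{p:varga}, \ref{p:mass_conservation}, \ref{p:forwardness}, \ref{p:advection}, and Corollary~\ref{c:splitting}.

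For existence, the natural candidate is the operator $A^{\ast}$ defined, for $G=(V,E,\omega)\in\mathcal{G}$ and $v\in V$ with at least one successor, by
\[
a^{\ast}_{uv} = -\frac{1}{\deg^+(v)\, d_{vu}} \quad \text{for } u \in N^+(v), \qquad a^{\ast}_{vv} = \sum_{w \in N^+(v)} \frac{1}{\deg^+(v)\, d_{vw}},
\]
with all other entries zero (for nodes without successors, the corresponding column is zero). This is the operator $A^{(4)}$ of Example~\ref{example:four-advections}. Boundedness on $\ell^{\infty}$ follows from the uniform bounds on $\mathcal{G}$: each row has at most $2\Delta_G$ nonzero entries, each of magnitude at most $1/\delta_G$, so $\|A^{\ast}_G\|_{\infty} \leq 2\Delta_G/\delta_G$. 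Each axiom is then a direct verification: Locality and Advection~I are built into the sparsity pattern, Mass Transfer~I follows from the signs, Mass Transfer~II from the column-sum identity $\sum_u a^{\ast}_{uv}=0$, Advection~II from $\sum_{u \in N^+(v)} d_{vu}\, a^{\ast}_{uv} = -1$ via Proposition~\ref{p:advection}, and Splitting from the ratio $a^{\ast}_{uv}/a^{\ast}_{wv} = d_{vw}/d_{vu}$ via Corollary~\ref{c:splitting}.

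For uniqueness, let $A$ be any operator satisfying all six axioms, and fix $v \in V$ in some $G \in \mathcal{G}$. By Proposition~\ref{p:forwardness} together with the axiom of Locality, $a_{uv}=0$ unless $u=v$ or $u \in N^+(v)$; by Proposition~\ref{p:mass_conservation} the diagonal is $a_{vv}=-\sum_{u \in N^+(v)} a_{uv}$. The only remaining freedom lies in the off-diagonal entries with $u \in N^+(v)$, and I determine these by reducing to a tree. I construct a leafless oriented tree $T \in \mathcal{G}$ containing a node $v'$ whose local neighborhood reproduces the same out-neighbors of $v$ with the same edge lengths, gluing an infinite forward chain onto each out-neighbor to avoid leaves; by Axiom~\ref{axiom:locality} this yields $a_{uv}(G) = a_{u'v'}(T)$ for each $u \in N^+(v)$. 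Inside $T$, Corollary~\ref{c:splitting} forces $d_{v'u'}\, a_{u'v'}(T) = -c$ to be a common constant as $u'$ varies over $N^+(v')$, and Proposition~\ref{p:advection} gives $\sum_{u'} d_{v'u'}\, a_{u'v'}(T) = -c\, \deg^+(v') = -1$, so $c = 1/\deg^+(v')$. Substituting back, $a_{uv}(G) = -1/(\deg^+(v)\, d_{vu}) = a^{\ast}_{uv}$, hence $A = A^{\ast}$.

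The main obstacle is the tree-reduction step: arranging the leafless tree $T$ so that its neighborhood at $v'$ is isomorphic to the neighborhood of $v$ in $G$ in the precise sense required by Axiom~\ref{axiom:locality}. This is immediate when the neighbors of $v$ carry no edges among themselves; in general one must either absorb additional structure into the tree construction or argue, using Advection~I and the fact that only the out-star of $v$ enters the Splitting and Advection~II identities, that the presence of such extra edges does not affect $a_{uv}$ for $u \in N^+(v)$. Once this localization is secured, the computation of each column is a direct consequence of the characterizations already established.
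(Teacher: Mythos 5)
Your proof takes essentially the same route as the paper's: uniqueness is obtained by localizing the $v$-column to the out-star of $v$ (Locality, Mass Transfer, Advection I), transferring to a leafless oriented tree, and combining Splitting (which makes $d_{vu}a_{uv}$ constant over $u\in N^+(v)$) with Advection II (which fixes $\sum_u d_{vu}a_{uv}=-1$) to land on $A^{(4)}$, while existence follows from Propositions~\ref{p:varga}, \ref{p:mass_conservation}, \ref{p:forwardness}, \ref{p:advection} and Corollary~\ref{c:splitting} exactly as in the paper. The tree-reduction obstacle you flag at the end is genuine but is equally unaddressed in the paper, whose proof simply asserts that a leafless oriented tree with $N_T[v']$ isomorphic to $N_G[v]$ exists --- an assertion that fails verbatim whenever the induced subgraph on $N[v]$ contains edges between neighbours of $v$ (as already happens at node $2$ of Figure~\ref{f:mass_transfer}) --- so your explicit awareness of this point is a strength of your write-up rather than a gap relative to the published argument.
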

\begin{proof}
Consider a graph $G\in\mathcal G$ and a node $v$ in $G$. The axioms of Locality~(Axiom~\ref{axiom:locality}), Mass Transfer I (Axiom~\ref{axiom:mass-transfer-I}) and II~(Axiom~\ref{axiom:mass-transfer-II}) and Advection I (Axiom~\ref{axiom:advection-I}) imply the following conditions:
\[
\begin{aligned}
a_{uv}&=0\quad \text{if $u\not\in N^+[v]$}\\
a_{vv}&=-\displaystyle\sum_{u\in N^+(v)}a_{uv}.\\
\end{aligned}
\]
If $N^+(v)=\varnothing$, then $a_{uv}=0$ for any $u\in V$. Otherwise, in order to apply the remaining axioms, we can construct a leafless oriented tree $T\in\mathcal G$ with a node~$v'$ such that the neighborhoods $N_G[v]$ and $N_T[v']$ are isomorphic. The axiom of Locality (Axiom~\ref{axiom:locality}) implies that, for any node $u$ in $N_G[v]$ and its corresponding node $u'$ in $N_T[v']$, we have:
\[
a_{uv}=[A_G\1_v](u)=[A_T\1_{v'}](u'),
\]
so the coefficients $a_{uv}$ of $A_G$ are the same as the coefficients of $A_T$, for any $u\in V$.  
Therefore, the axiom of Advection II (Axiom~\ref{axiom:advection-II}) implies that
\[
\sum_{u\in N^+(v)}d_{vu}a_{uv}=-1.
\]
Finally, from the axiom of Splitting (Axiom~\ref{axiom:splitting}) it follows that the quantity $d_{vu}a_{uv}$ is the same for any neighbor $u\in N^+(v)$. We can conclude that it holds
\[
a_{uv}=
\begin{cases}
-\dfrac{1}{\deg^+(v)\,d_{vu}} & \text{if $u\in N^+(v)$}\\[2.5ex]
\displaystyle\sum_{w\in N^+(v)}\dfrac{1}{\deg^+(v)\,d_{vw}} & \text{if $u=v$}\\[2.5ex]
0 & \text{otherwise},
\end{cases}
\]
which uniquely characterizes $A_G$. This corresponds to the operator $A^{(4)}$, which we defined in~\eqref{e:A1-4}.

Finally, for any $G\in\mathcal G$ the operator $A_G:\ell^\infty\to \ell^\infty$ defined above is bounded since, for any $f\in \ell^\infty$, we have
\begin{equation}
\label{e:bounded}
\|A_Gf\|_\infty=2\sup_{u\in V}|a_{uu}|\cdot\|f\|_\infty
\le\dfrac{2}{\delta_G} \|f\|_\infty,
\end{equation}
where $\delta_G$ represents a lower bound on the edge lengths.
Propositions~\ref{p:varga}, \ref{p:mass_conservation}, \ref{p:forwardness}, \ref{p:advection} and Corollary~\ref{c:splitting} ensure that $A$ satisfies all the mentioned axioms.
\end{proof}

\subsection{Extension to more general graphs}
\label{sec:extension}

The analysis so far has been restricted to \emph{oriented} graphs, but all the operators defined in~\eqref{e:A1-4} can also be applied to \emph{directed graphs} (where bidirectional edges are allowed) without modifications.

It is easily verified that the axioms of Locality (Axiom \ref{axiom:locality}), Mass Transfer (Axioms~\ref{axiom:mass-transfer-I} and~\ref{axiom:mass-transfer-II}) and of Advection I (Axiom \ref{axiom:advection-I}) are satisfied by both $A^{(3)}$ and $A^{(4)}$ in the context of directed graphs, with the same proofs given in the oriented case.

\begin{remark}
The lack of bidirectional edges in the graphs of $\mathcal{G}$ was actually necessary for the uniqueness of the advection operator proved in Theorem~\ref{t:characterization}. Indeed, given a node $v$ of a graph $G\in\mathcal{G}$, we assumed the existence of an oriented tree with a node $v'$ such that $N[v]$ is isomorphic to $N[v']$, in order to apply the axioms of Advection II and Splitting. This is only true if $G$ is oriented.
\end{remark}

\begin{remark}
\label{r:laplacian}
When $G$ is a simple graph (i.e., an undirected graph with all edge lengths equal to $1$), we have
\[
\begin{split}
[A^{(3)}_Gf](u)&=\deg(u)f(u)-\sum_{v\in N(u)}f(v)\\
[A^{(4)}_Gf](u)&=f(u)-\sum_{v\in N(u)}\dfrac{f(v)}{\deg(v)},
\end{split}
\]
which correspond to the combinatorial Laplacian and to the right-normalized Laplacian, respectively.
\end{remark}

\begin{remark}
The bound in~\eqref{e:bounded} ensures that $A^{(4)}_G:\ell^\infty\to\ell^\infty$ is a bounded operator without any assumption on the maximum degree of the nodes of $G$. Therefore, in the special case of $A^{(4)}$, we can further extend the class $\mathcal G$ by by permitting unbounded, yet finite, degrees.
\end{remark}

\section{Analytical and numerical examples}
\label{sec:advection_dynamics}

The relevant Python code used for the experiments and analyses presented in the following sections is available as a GitHub repository \texttt{\href{https://github.com/francesco-zigliotto/graph-advection/blob/main/graph_advection.ipynb}{francesco-zigliotto/graph- advection}}.

\subsection{Motion on an infinite rectangular grid}

To demonstrate the dynamics induced by the operators we have defined in Examples~\ref{example:four-advections} and~\ref{e:one_more}, let us consider some illustrative examples.

\begin{figure}
\centering
\begin{subfigure}[b]{0.35\columnwidth}
\begin{tikzpicture}[xscale=1.22, yscale=.7,baseline, anchor=base]
\SetEdgeStyle[LineWidth=.4pt]
\Vertex[x=0.5,y=0,color=white,size=0]{A}
\Vertex[x=1,y=0,color=white,size=0]{B}
\Vertex[x=2,y=0,color=white,size=0]{C}
\Vertex[x=3,y=0,color=white,size=0]{D}
\Vertex[x=3.5,y=0,color=white,size=0]{E}
\Vertex[x=0.5,y=1,color=white,size=0]{F}
\Vertex[x=1,y=1]{G}
\Vertex[x=2,y=1]{H}
\Vertex[x=3,y=1]{I}
\Vertex[x=3.5,y=1,color=white,size=0]{J}
\Vertex[x=0.5,y=2,color=white,size=0]{K}
\Vertex[x=1,y=2]{L}
\Vertex[x=2,y=2]{M}
\Vertex[x=3,y=2]{N}
\Vertex[x=3.5,y=2,color=white,size=0]{O}
\Vertex[x=0.5,y=3,color=white,size=0]{P}
\Vertex[x=1,y=3]{Q}
\Vertex[x=2,y=3]{R}
\Vertex[x=3,y=3]{S}
\Vertex[x=3.5,y=3,color=white,size=0]{T}
\Vertex[x=0.5,y=4,color=white,size=0]{U}
\Vertex[x=1,y=4,color=white,size=0]{V}
\Vertex[x=2,y=4,color=white,size=0]{W}
\Vertex[x=3,y=4,color=white,size=0]{X}
\Vertex[x=3.5,y=4,color=white,size=0]{Y}
\Edge[Direct,style={dashed}](B)(G)
\Edge[Direct,style={dashed}](C)(H)
\Edge[Direct,style={dashed}](D)(I)
\Edge[Direct,style={dashed}](F)(G)
\Edge[Direct,style={dashed}](K)(L)
\Edge[Direct,style={dashed}](I)(J)
\Edge[Direct,style={dashed}](N)(O)
\Edge[Direct,style={dashed}](P)(Q)
\Edge[Direct,style={dashed}](Q)(V)
\Edge[Direct,style={dashed}](R)(W)
\Edge[Direct,style={dashed}](S)(X)
\Edge[Direct,style={dashed}](S)(T)
\Edge[Direct](G)(H)
\Edge[Direct](H)(I)
\Edge[Direct](L)(M)
\Edge[Direct](M)(N)
\Edge[Direct](G)(L)
\Edge[Direct](H)(M)
\Edge[Direct](I)(N)
\Edge[Direct](L)(Q)
\Edge[Direct](M)(R)
\Edge[Direct](N)(S)
\Edge[Direct](Q)(R)
\Edge[Direct](R)(S)
\draw[|-|] (K.west)--(P.west) node[midway, above, sloped] {$d_y$};
\draw[|-|] (V.north)--(W.north) node[midway, above, sloped] {$d_x$};
\end{tikzpicture}
\caption{The infinite oriented rectangular grid $G$, with a unit cell of length $d_x$ and height $d_y$. In the simulation, $d_x=3$ and $d_y=1$.}
\label{fig:infinite-grid}
\end{subfigure}\hspace{1em}
\begin{subfigure}[b]{.6\columnwidth}
\centering
\begin{tikzpicture}
\begin{axis}[
    xmin=0, xmax=120,
    ymin=0, ymax=120,
    xlabel={$x$},
    ylabel={$y$},
    width=7cm,
    height=7cm,
    ]
    \addplot graphics[xmin=0, xmax=120, ymin=0, ymax=120] {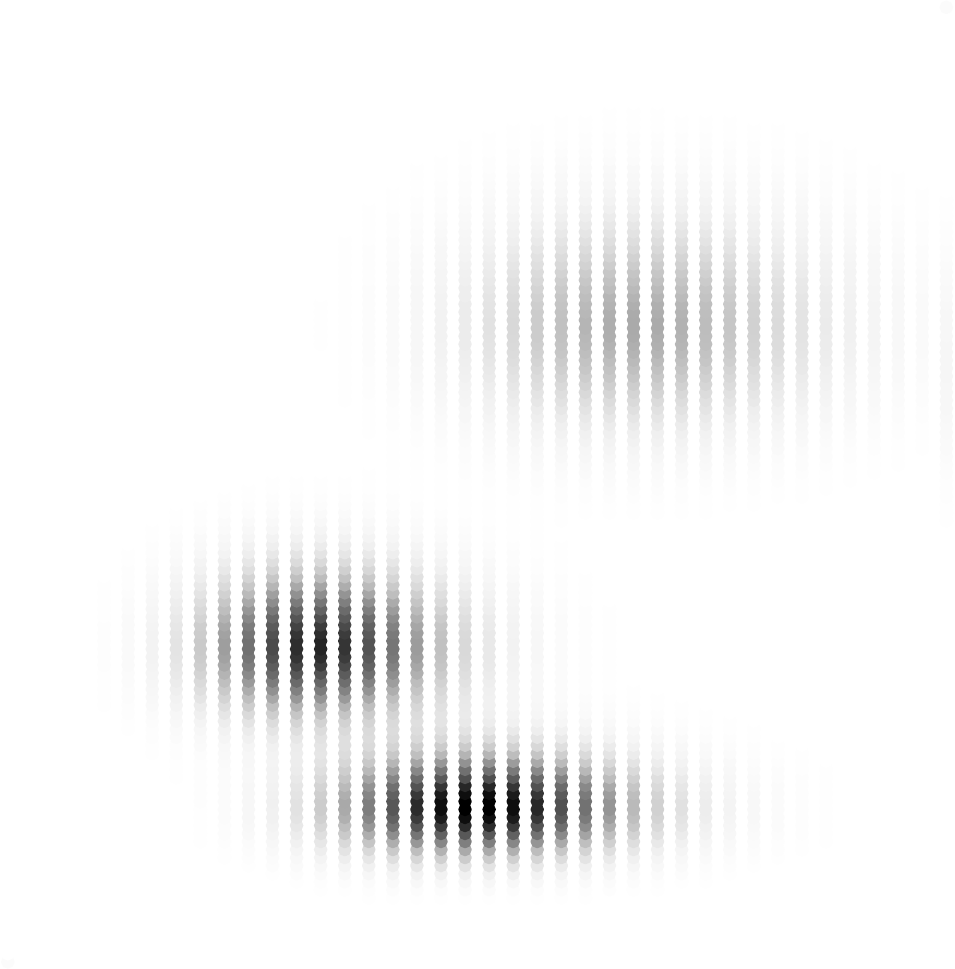};
    \draw[] (axis cs:80,80) -- (axis cs:90,100);
    \node[anchor=south] at (axis cs:90,100) {$A^{(3)}$};
    \draw[] (axis cs:40,40) -- (axis cs:20,55);
    \node[anchor=south] at (axis cs:20,55) {$A^{(4)}$};
    \draw[] (axis cs:60,20) -- (axis cs:90,40);
    \node[anchor=south west] at (axis cs:90,40) {$A^{(5)}$};
    \draw[dashed] (axis cs:40,40) -- (axis cs:40,0);
    \draw[dashed] (axis cs:40,40) -- (axis cs:0,40);
    \draw[dashed] (axis cs:80,80) -- (axis cs:80,0);
    \draw[dashed] (axis cs:80,80) -- (axis cs:0,80);
    \draw[dashed] (axis cs:60,20) -- (axis cs:60,0);
    \draw[dashed] (axis cs:60,20) -- (axis cs:0,20); 
\end{axis}
\end{tikzpicture}
\caption{Mass concentration in the $x,y$ plane at time $t = 80$, starting with a unit mass at the bottom-left corner, according to operators $A^{(3)}$, $A^{(4)}$, and $A^{(5)}$. The expected value of the mass distribution is highlighted.}
\label{f:grid_simulation}
\end{subfigure}
\caption{Depiction of the infinite grid and simulation on a $40\times 120$ grid of the mass distribution according to three different advection operators. The simulation was conducted on a sufficiently large truncated grid to minimize border effects.}
\end{figure}

\begin{example}\label{example:motion_on_infinite_grids}
Let $G$ be an (infinite) oriented rectangular grid with nodes $d_x\mathbb{Z}\times d_y\mathbb{Z}$, where $d_x$ and $d_y$ are the width and height of each cell, respectively, as shown in Figure~\ref{fig:infinite-grid}. We aim to study the advection dynamics of a unit mass  initially concentrated at a node $v$. Let $A$ be an advection operator that satisfies the axioms of Locality (Axiom~\ref{axiom:locality}) and Advection~I (Axiom~\ref{axiom:advection-I}). These axioms ensure that the support of $A_G\1_v$ is $N^+[v]=\{v,v_x,v_y\}$, where $v_x$ and $v_y$ denote the two nodes to the right of $v$ and above $v$, respectively. Due to the symmetry of $G$, we find that the quantities $a_{vv}$, $a_{vv_x}$, and $a_{vv_y}$ are independent of~$v$, allowing us to define
\[
\alpha_0=a_{vv}, \quad \alpha_x=-a_{vv_x}, \quad \alpha_y=-a_{vv_y}.
\]

Let $u$ be a node with a horizontal distance of $n_xd_x$ from $v$ and a vertical distance of $n_yd_y$. The number of walks from $v$ to $u$ of length $k$ (i.e.,\, with $k-1$ edges) is
\[
[A_G^k\1_v](u) =
\alpha_0^{k-n_x-n_y}\alpha_x^{n_x}\alpha_y^{n_y}\binom{k}{n_x+n_y}\dfrac{(n_x+n_y)!}{n_x!n_y!}
\]
and therefore
\begin{equation}
\label{e:grid_motion_general_case}
f_t(u)=\bigl[e^{-tA_G}\1_v\bigr](u)=\sum_{k=0}^\infty \dfrac{(-t)^k}{k!} [A_G^k\1_v](u)=\dfrac{(t\alpha_x)^{n_x}(t\alpha_y)^{n_y}}{n_x!n_y!}e^{-\alpha_0t}.
\end{equation}

If $A$ also satisfies the axioms of Mass Transfer (Axioms~\ref{axiom:mass-transfer-I} and~\ref{axiom:mass-transfer-II}), then $\alpha_0=\alpha_x+\alpha_y$, so~\eqref{e:grid_motion_general_case} can be written in terms of Poisson distributions of the variables $n_x$ and $n_y$:
\[
f_t(u)=\dfrac{(t\alpha_x)^{n_x}e^{-t\alpha_x}}{n_x!}\cdot\dfrac{(t\alpha_y)^{n_y}e^{-t\alpha_y}}{n_y!}.
\]

The expected values of $n_x$ and $n_y$ are $t\alpha_x$ and $t\alpha_y$, respectively.
Assuming that the grid is embedded in $\mathbb{R}^2$, with origin in $v$, this corresponds to the point $(t\,d_x\alpha_x,\, t\,d_y\alpha_y)$. If $A$ also satisfies Advection II (Axiom~\ref{axiom:advection-II}) then we have $\alpha_xd_x+\alpha_yd_y=1$, so that the expected value of the mass distribution is $(ts,t(1-s))$ for some $0\le s\le1$. In this way, the mass has travelled, on average, a total distance of $t$.

Finally, if $A$ satisfies the axiom of Splitting  (Axiom~\ref{axiom:splitting}), then $\alpha_xd_x=\alpha_yd_y$, which implies $s=1/2$, and the expected value of the mass distribution is at $(t/2, t/2)$, consistent with constant-speed motion along the edges' directions.

Figure~\ref{f:grid_simulation} summarizes the behaviors described above, where the operators $A^{(3)}$, $A^{(4)}$, and $A^{(5)}$ are simulated numerically on a grid with $d_x = 3d_y$. As observed, only the solution $f_t$ corresponding to $A^{(4)}$ moves at the correct speed and direction.
\end{example}

\subsection{Advection in a half-line graph}
\label{ss:advection2}

While in Example~\ref{example:motion_on_infinite_grids} we studied advection in a 2-dimensional setting, here we focus on the $1$-dimensional case. Naturally, the advection process in a simple half-line graph could be computed analytically, yielding similar results to those in Example~\ref{example:motion_on_infinite_grids}. However, in this case, we modify the topology of the line graph by adding shortcut edges to evaluate how different operators handle the altered structure.

\begin{figure}
\centering
\begin{subfigure}{\textwidth}
\centering
\begin{tikzpicture}[xscale=1, yscale=.6,baseline, anchor=base]
\SetEdgeStyle[LineWidth=.4pt]
\Vertex[x=1,y=0, label=0, position=south, fontsize=\footnotesize, distance=-2]{B}
\Vertex[x=2,y=0, label=1, position=south, fontsize=\footnotesize, distance=-2]{C}
\Vertex[x=3,y=0, label=2, position=south, fontsize=\footnotesize, distance=-2]{D}
\Vertex[x=4,y=0, label=3, position=south, fontsize=\footnotesize, distance=-2]{E}
\Vertex[x=5,y=0, label=4, position=south, fontsize=\footnotesize, distance=-2]{F}
\Vertex[x=6,y=0, label=5, position=south, fontsize=\footnotesize, distance=-2]{G}
\Vertex[x=7,y=0, label=6, position=south, fontsize=\footnotesize, distance=-2]{H}
\Vertex[x=8,y=0, label=7, position=south, fontsize=\footnotesize, distance=-2]{I}
\Vertex[x=9,y=0, color=white]{L}
\Edge[Direct](B)(C)
\Edge[Direct](C)(D)
\Edge[Direct](D)(E)
\Edge[Direct](E)(F)
\Edge[Direct](F)(G)
\Edge[Direct](G)(H)
\Edge[Direct](H)(I)
\Edge[Direct, style=dashed](I)(L)
\Edge[Direct, bend=55](B)(D)
\Edge[Direct, bend=55](C)(E)
\Edge[Direct, bend=55](D)(F)
\Edge[Direct, bend=55](E)(G)
\Edge[Direct, bend=55](F)(H)
\Edge[Direct, bend=55](G)(I)
\Edge[Direct, bend=55, style=dashed](H)(L)
\end{tikzpicture}
\caption{The half-line graph $G$ with the addition of shortcut edges. Every edge from a node $u$ to $u+1$ has length $1$, while every edge from $u$ to $u+2$ has length $2$.}
\label{f:half_line_graph}
\end{subfigure}
\begin{subfigure}{\textwidth}
\centering
\begin{tikzpicture}
\begin{axis}[
    xlabel={$u$ (node)},
    ylabel={$f_{150}(u)$},
    ylabel style = {rotate = -90},
    width=12cm,
    height=5.5cm,
    ymin=0,
    ymax=0.045,
    xmin=0,
    xmax=400,
    xticklabel style={
        font=\small,
    },
    yticklabel style={
        font=\small,
        /pgf/number format/fixed,
    },
    legend pos = outer north east,
]
\addplot [very thick] table {figures/advection_A2.txt}
coordinate[pos=0.7,pin={[pin edge={thick}, black]40:\small$A^{(2)}$ (scaled)}, fill=black] ();
\addplot [very thick] table {figures/Advection_A3.txt}
coordinate[pos=0.7,pin={[pin edge={thick}, black]160:\small$A^{(3)}$}] ();
\addplot [very thick] table {figures/Advection_A4.txt}
coordinate[pos=0.375,pin={[pin edge={thick}, black]60:\small$A^{(4)}$}, fill=black] ();
\addplot [very thick] table {figures/Advection_A5.txt}
coordinate[pos=0.375,pin={[pin edge={thick}, black]20:\small$A^{(5)}$}, fill=black] ();
\end{axis}
\end{tikzpicture}
\caption{Plot of $f_t=e^{-A_Gt}f_0$ according to different operators, with $t=150$ and initial unitary mass concentrated at node $0$. In the simulation, the graph has been truncated at node $400$. The plot relative to $A^{(2)}$ has been scaled down to $4\%$.}
\label{f:half_line_results}
\end{subfigure}
\caption{Advection motion on an infinite half-line graph with shortcut edges.}
\label{f:half_line}
\end{figure}
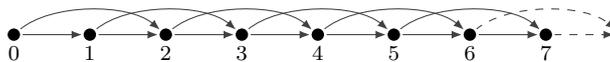
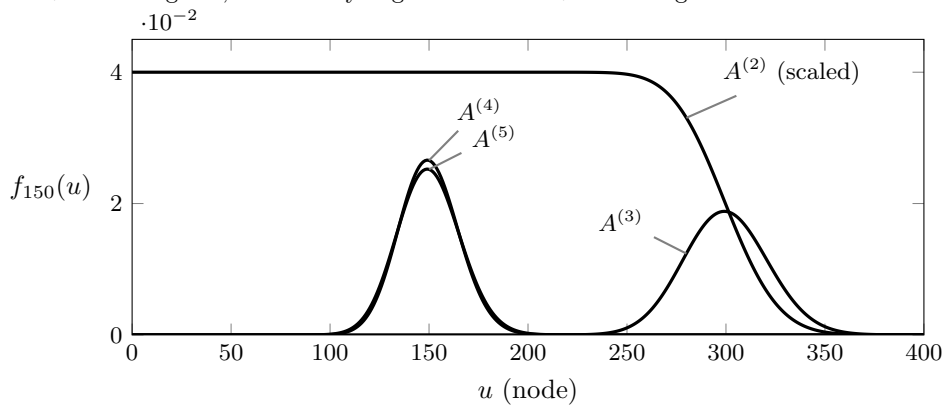

\begin{example}
We consider the infinite half-line graph $G$ of Figure~\ref{f:half_line_graph}, with set of nodes $\mathbb{N}$: every node $u$ is connected to $u+1$ with an edge of length $1$ and to $u+2$ with an edge of length $2$. Note that $G$ belongs to the class of graphs described in Remark~\ref{r:extension_coherent_distances}, as we can set $\phi(u)=u$. Figure~\ref{f:half_line_results} shows the results of numerical simulation of the advection equation at $t=150$, with initial mass on node~$0$, according to the operators defined in Example~\ref{example:four-advections}.

As expected, all the operators preserve the positivity of the solution. However, the simulation with $A^{(2)}$ fails entirely, as part of the mass becomes trapped in the visited nodes. Both $A^{(4)}$ and $A^{(5)}$ satisfy the axiom of Advection II (Axiom~\ref{axiom:advection-II}), ensuring that the mass moves at the correct speed, with an average distance of exactly 150 from node~$0$ (see also Remark~\ref{r:extension_coherent_distances}).
\end{example}

\subsection{Mass splitting on a tree}
\label{ss:splitting}

In this section, we consider another advection example with focus on the axiom of Splitting. A small tree is considered for convenience, though larger (or even infinite) trees would produce similar results.

\begin{figure}
\begin{subfigure}{\textwidth}
\centering
\begin{tikzpicture}
\begin{scope}[xscale=1.5, yscale=1.5]
\tikzgraphsettings
\node (0) [label={[label distance=1]90:$v$}] at (0,0) {};
\node (1) [label={[label distance=0]90:$z_1$}] at (0:1) {};
\node (11) [label={[label distance=0]90:$z_2$}] at (0:1+0.5) {};
\node (111) [label={[label distance=0]80:$z_3$}] at (0:1+0.5+.25) {};
\node (2) [label={[label distance=0]80:$w_1$}] at (150:0.5) {};
\node (22) [label={[label distance=0]80:$w_2$}] at (150:0.5+0.5) {};
\node (222) [label={[label distance=0]100:$w_3$}] at (150:0.5+0.5+.25) {};
\node (3) [label={[label distance=0]-80:$u_1$}] at (210:0.33) {};
\node (33) [label={[label distance=0]-80:$u_2$}] at (210:0.33+0.5) {};
\node (333) [label={[label distance=0]-100:$u_3$}] at (210:0.33+0.5+.25) {};
\draw (0) [->] to (1);
\draw (0) [->] to (2);
\draw (0) [->] to (3);
\draw (1) [->] to (11);
\draw (11) [->] to (111);
\draw (2) [->] to (22);
\draw (22) [->] to (222);
\draw (3) [->] to (33);
\draw (33) [->] to (333);
\end{scope}
\end{tikzpicture}
\caption{A tree $T\in\mathcal{G}$ such that $d_{vz_1}=2\,d_{vw_1}=3\,d_{vu_1}$.}
\label{f:tree}
\end{subfigure}

\begin{subfigure}{\textwidth}
\centering
\parbox[t]{6.5cm}{\null\par
\raggedleft\begin{tikzpicture}
\begin{axis}[
    width=5cm,
    height=3.8cm,
    title={\small$A^{(2)}$},
    legend style={
        at={(-.85,0.246)},
        anchor= west,
        row sep=0ex},
    xticklabel style={font=\small}, yticklabel style={font=\small},
]
\addplot [ultra thick, dotted, color=mplblue] table {figures/tree_splitting_2_branch_1.txt};
\addplot [ultra thick, dashed, color=mplorange] table {figures/tree_splitting_2_branch_2.txt};
\addplot [ultra thick, color=mplgreen] table {figures/tree_splitting_2_branch_3.txt};
\addlegendentry{\raisebox{10pt}{\footnotesize$\displaystyle\sum_{i=1}^3f_t(z_i)$}}
\addlegendentry{\raisebox{10pt}{\footnotesize$\displaystyle\sum_{i=1}^3f_t(w_i)$}}
\addlegendentry{\raisebox{10pt}{\footnotesize$\displaystyle\sum_{i=1}^3f_t(u_i)$}}
\end{axis}
\end{tikzpicture}}
\parbox[t]{4.2cm}{\null\par\begin{tikzpicture}
\begin{axis}[
    width=5cm,
    height=3.8cm,
    title={\small$A^{(3)}$},
    xticklabel style={font=\small}, yticklabel style={font=\small},
]
\addplot [ultra thick, dotted, color=mplblue] table {figures/tree_splitting_3_branch_1.txt};
\addplot [ultra thick, dashed, color=mplorange] table {figures/tree_splitting_3_branch_2.txt};
\addplot [ultra thick, color=mplgreen] table {figures/tree_splitting_3_branch_3.txt};
\end{axis}
\end{tikzpicture}}

\vspace{-.5cm}
\parbox[t]{6.5cm}{\raggedleft\begin{tikzpicture}
\begin{axis}[
    width=5cm,
    height=3.8cm,
    title={\small$A^{(4)}$},
    xticklabel style={font=\small}, yticklabel style={font=\small},
]
\addplot [ultra thick, dotted, color=mplblue] table {figures/tree_splitting_4_branch_1.txt};
\addplot [ultra thick, dashed, color=mplorange] table {figures/tree_splitting_4_branch_2.txt};
\addplot [ultra thick, color=mplgreen] table {figures/tree_splitting_4_branch_3.txt};
\end{axis}
\end{tikzpicture}}
\parbox[t]{4.2cm}{\begin{tikzpicture}
\begin{axis}[
    width=5cm,
    height=3.8cm,
    title={\small$A^{(5)}$},
    xticklabel style={font=\small}, yticklabel style={font=\small},
]
\addplot [ultra thick, dotted, color=mplblue] table {figures/tree_splitting_5_branch_1.txt};
\addplot [ultra thick, dashed, color=mplorange] table {figures/tree_splitting_5_branch_2.txt};
\addplot [ultra thick, color=mplgreen] table {figures/tree_splitting_5_branch_3.txt};
\end{axis}
\end{tikzpicture}}
\caption{Plot of the mass in each branch of $T$ for $t\in[0,1]$, with initial mass at $v$.}
\label{f:tree_simul}
\end{subfigure}
\caption{Advection process on a small tree.}
\end{figure}
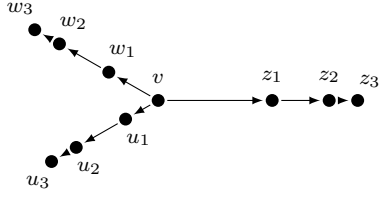
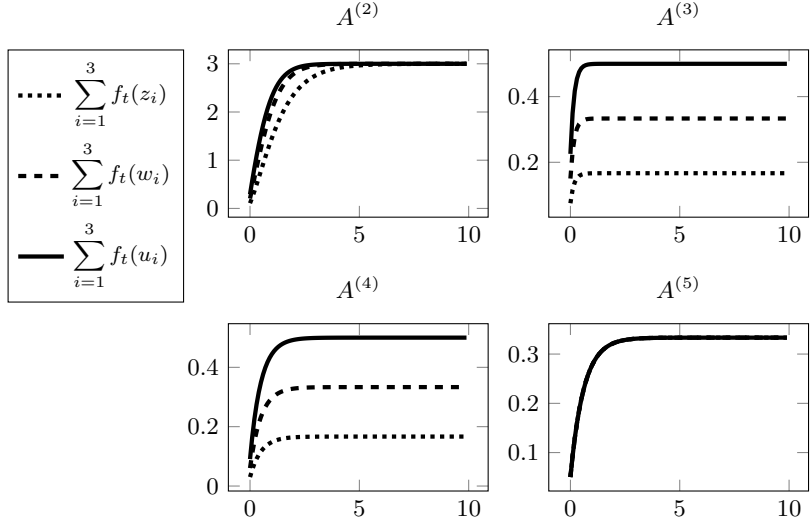

\begin{example}
Let $T$ be the tree in Figure~\ref{f:tree}, such that the distances from the root $v$ to its children $u_1$, $w_1$ and $z_1$ are $1/3$, $1/2$ and $1$, respectively. Let us consider an advection process $f_t$ with initial unit mass concentrated at node $0$, and define $s_u(t) = f_t(u_1) + f_t(u_2) + f_t(u_3)$, and similarly for $s_w(t)$ and $s_z(t)$.

The axiom of Splitting (Axiom~\ref{axiom:splitting}) states that, in the limit, $s_u(t)$, $s_w(t)$, and $s_z(t)$ are in the ratio $3:2:1$. Figure~\ref{f:tree_simul} shows the results of numerical simulations of the advection process using different operators. According to Corollary~\ref{c:splitting}, both $A^{(3)}$ and $A^{(4)}$ satisfy the axiom of Splitting, and indeed the mass divides among the three branches in the correct proportions. The other two operators exhibit a clearly different behavior: $A^{(5)}$ was designed so that $a_{uv}$ is constant for $u\in N^+(v)$, leading to an equal division of mass among $v$'s children. On the other hand, $A^{(2)}$  satisfies the conditions of Corollary~\ref{c:splitting}, but the result is not applicable as its hypotheses are not met, and the Splitting Axiom is not satisfied.
\end{example}

\begin{remark}
Note that in the case of $A^{(3)}$, $A^{(4)}$ and $A^{(5)}$, it also seems that the ratio of mass in each branch is constant over time. This is a consequence of~\eqref{e:exact_formula}, in conjunction with the fact that  $a_{u_iu_i}=a_{w_iw_i}=a_{z_iz_i}$ for $i=1,2,3$, as $N^+[u_i]$, $N^+[w_i]$, and $N^+[z_i]$ are isomorphic.
\end{remark}

\subsection{Advection on road networks}

We conclude this series of examples with a simulation of the advection process on a real-world graph.

\begin{figure}[p]
\centering
\begin{subfigure}{.5\textwidth}
\includegraphics[width=\textwidth]{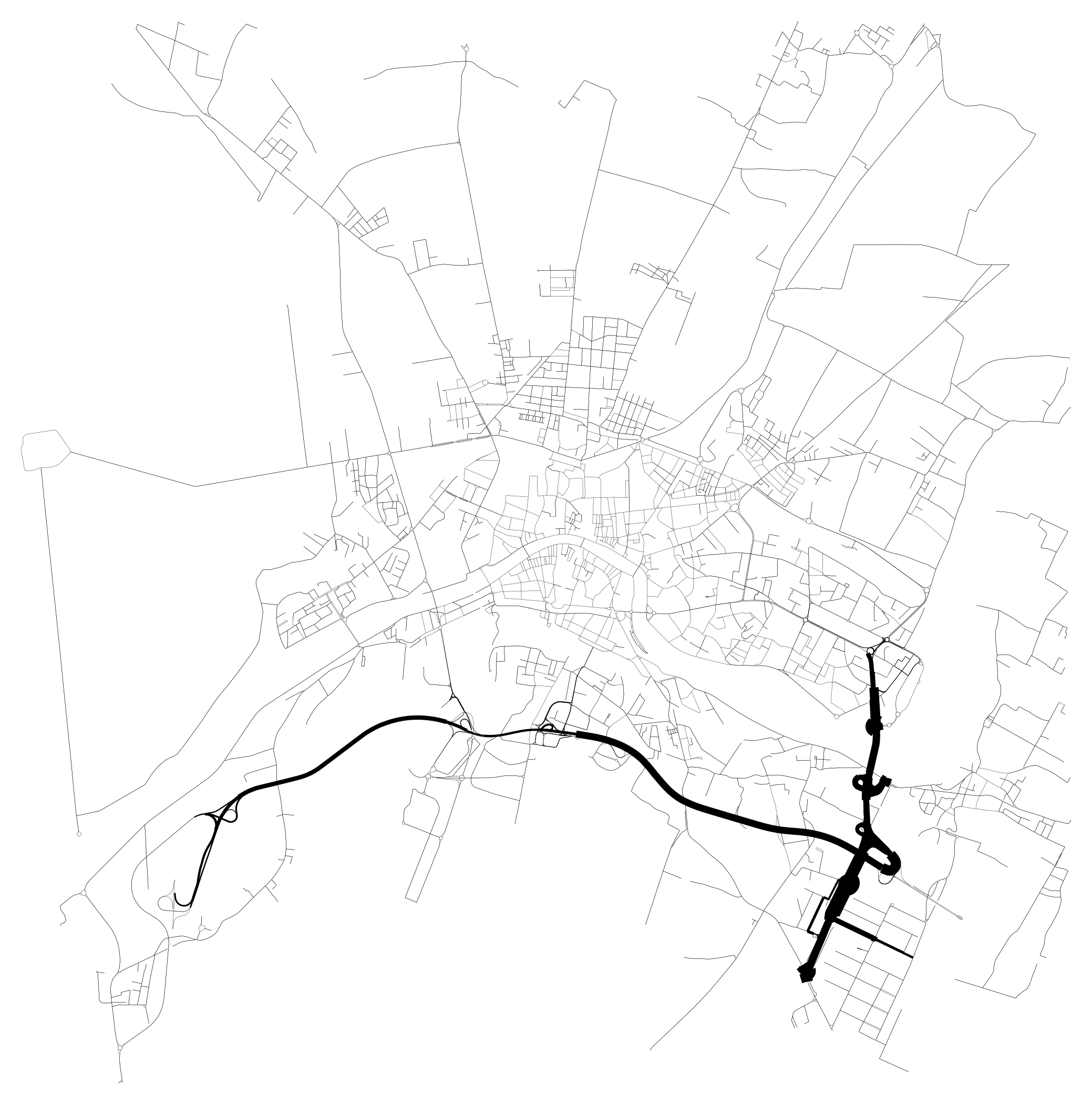}%
\llap{\includegraphics[width=\textwidth]{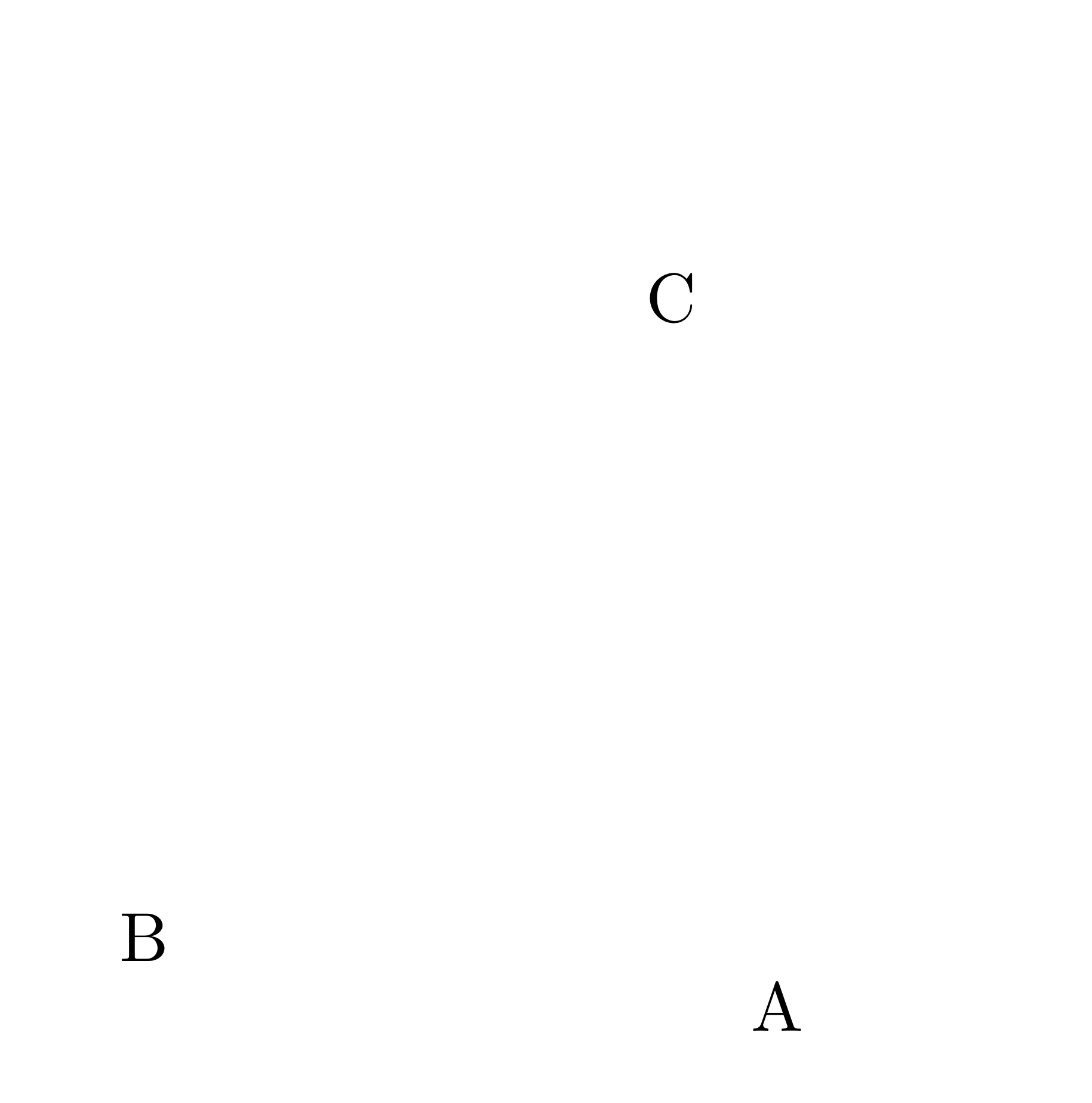}}
\caption{$t=500$}
\label{500}
\end{subfigure}%
\begin{subfigure}{.5\textwidth}
\includegraphics[width=\textwidth]{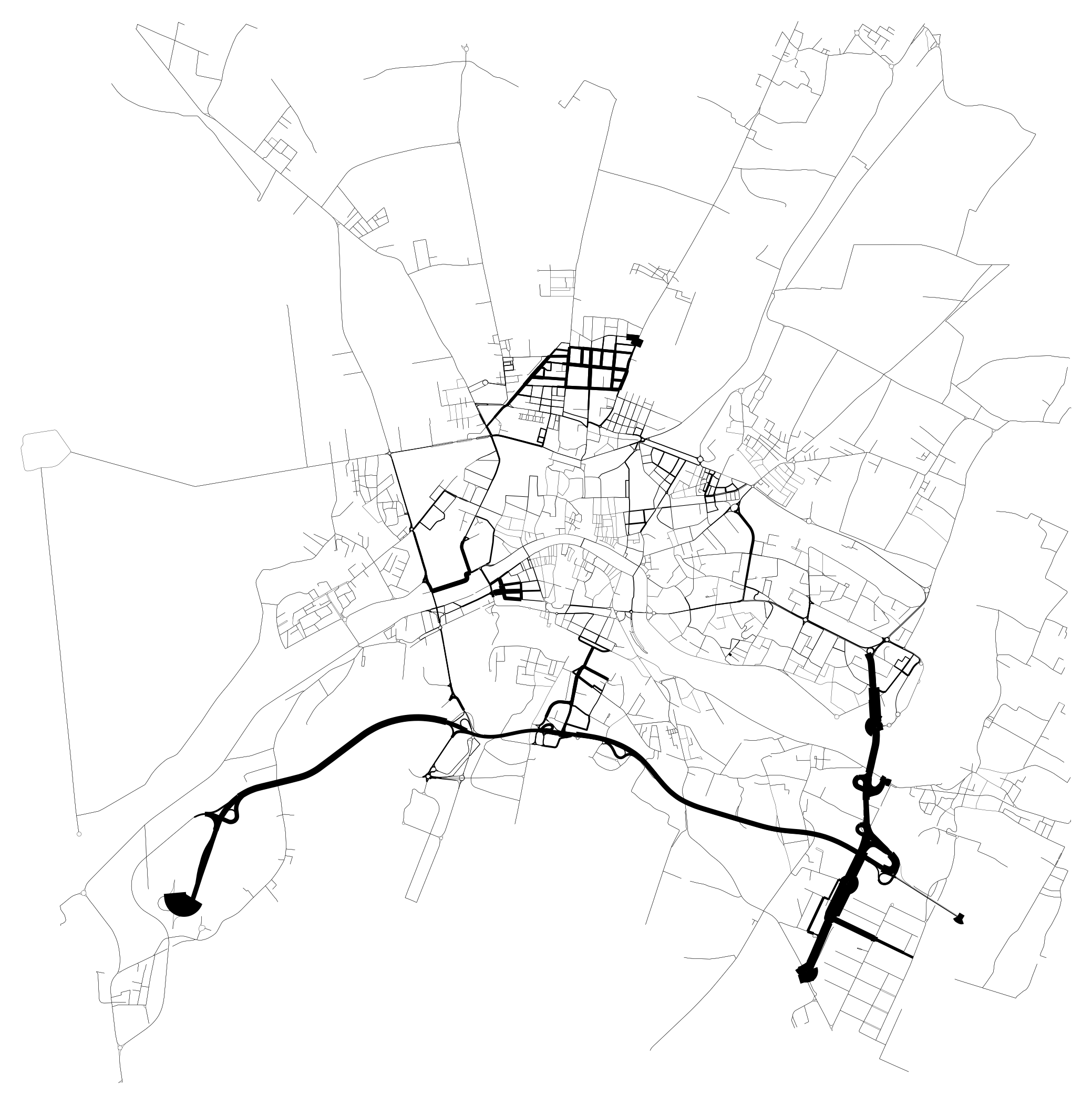}%
\llap{\includegraphics[width=\textwidth]{abc.png}}	
\caption{$t=5000$}
\label{5000}
\end{subfigure}

\begin{subfigure}{.5\textwidth}
\includegraphics[width=\textwidth]{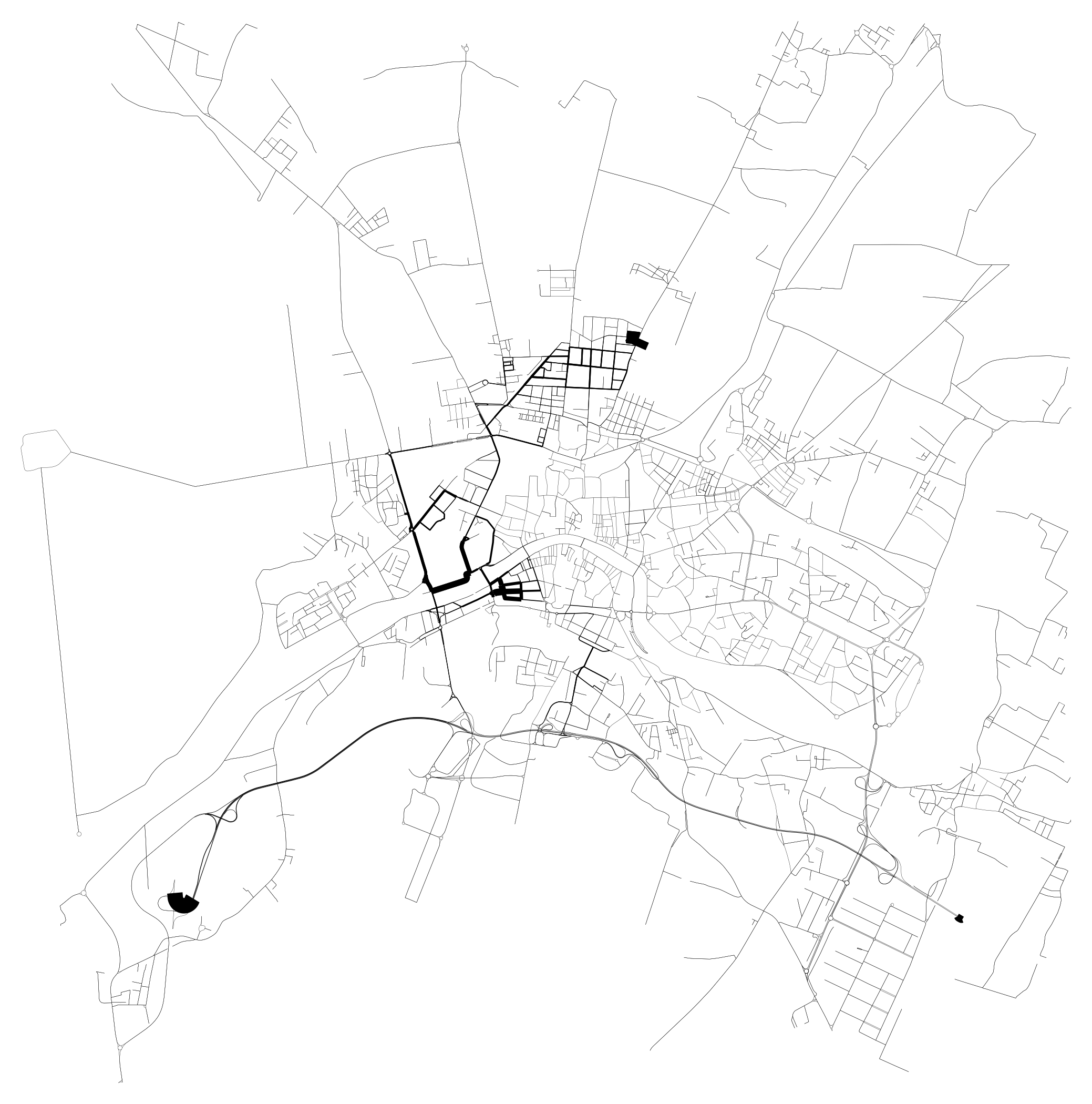}%
\llap{\includegraphics[width=\textwidth]{abc.png}}		
\caption{$t=25000$}
\label{25000}
\end{subfigure}%
\begin{subfigure}{.5\textwidth}
\includegraphics[width=\textwidth]{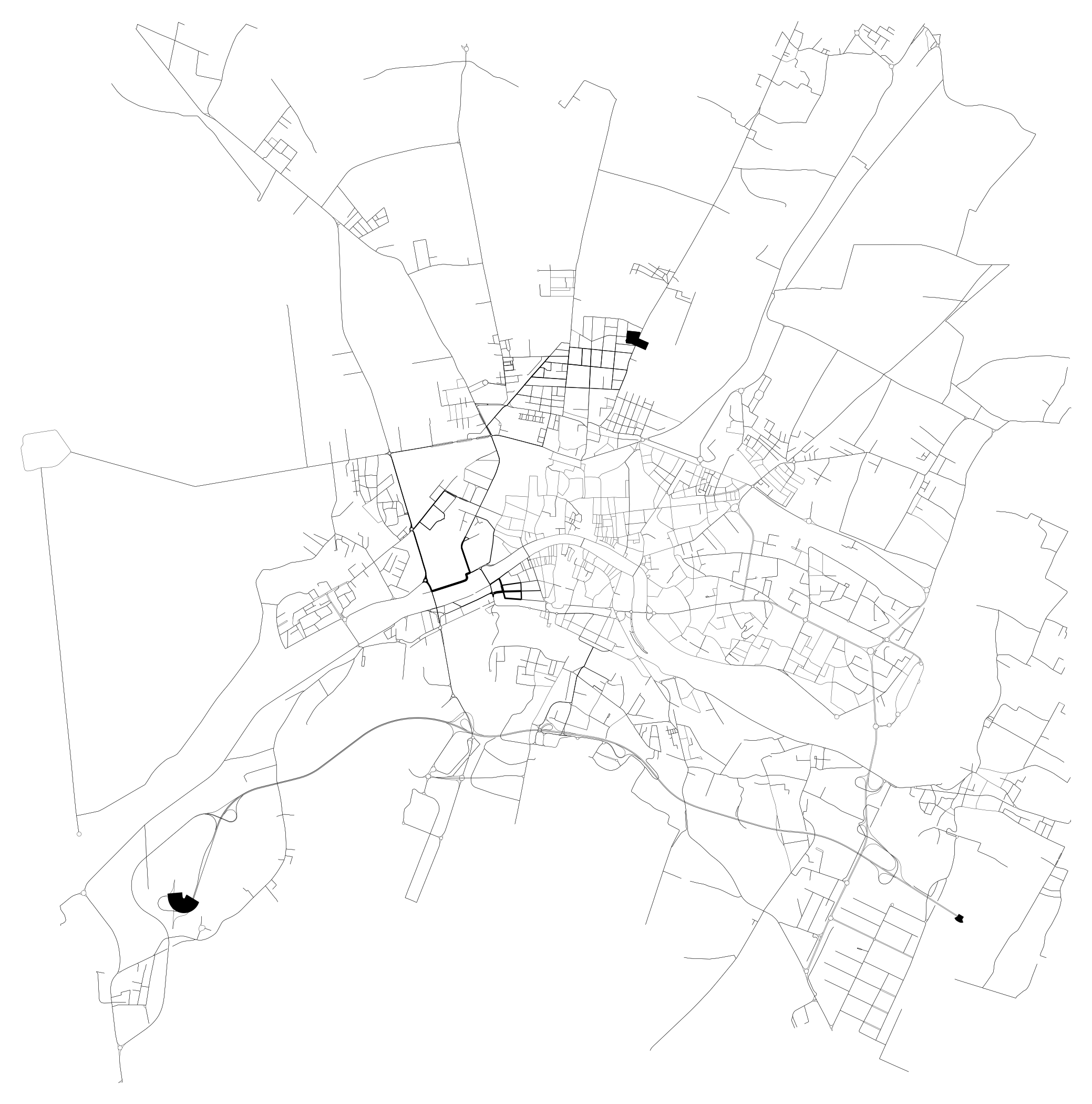}%
\llap{\includegraphics[width=\textwidth]{abc.png}}
\caption{$t=50000$}
\label{50000}
\end{subfigure}
\caption{Advection process on the street network of Pisa. The initial mass is concentrated at node $A$ and directed to nodes $B$ and~$C$. The traffic concentration is highlighted by the means of edge widths. The width of each edge $(u,v)$ is $\alpha+\beta \sqrt{f_t(u)f_t(v)}$, where $\alpha$ and $\beta$ were chosen for best readability.}
\label{f:pisa}
\end{figure}

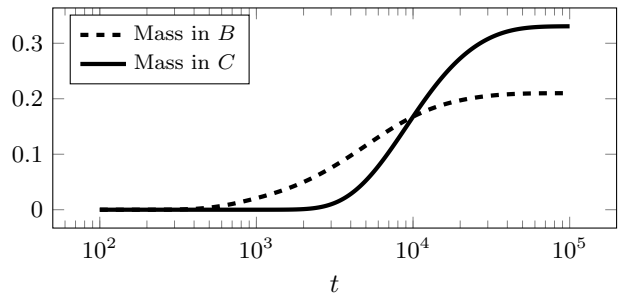
\begin{figure}[p]
\null\bigskip
\begin{tikzpicture}
\begin{axis}[
    xmode=log,
    width=9cm,
    height=4.5cm,
    legend pos = north west,
    xlabel=$t$,
    xticklabel style={font=\small},
    yticklabel style={font=\small},
]
\addplot [ultra thick, dashed, color=mplblue] table {figures/mass_in_B.txt};
\addplot [ultra thick, color=mplorange] table {figures/mass_in_C.txt};
\addlegendentry{\footnotesize Mass in $B$}
\addlegendentry{\footnotesize Mass in $C$}
\end{axis}
\end{tikzpicture}
\par\vspace{-.5\baselineskip}
\caption{Time evolution of the mass in nodes $B$ and $C$, i.e., $f_t(B)$ and $f_t(C)$, according to the dynamic described in Figure~\ref{f:pisa}.}
\label{fig:pisa_mass_bc}
\end{figure}

\begin{example}
Let $G=(V,E, \omega)$ be the directed graph representing the roads in a $10\,$km box around the city of Pisa obtained through the \textsc{OSMNx} package \cite{OSMnx}, where the weights $\omega$ represent the \emph{travel time} of the roads. Let $A$ be a node of the graph (Figure~\ref{f:pisa}). As most roads are bidirectional, $G$ is almost an undirected graph, and the advection process with initial mass in $A$ would result in a diffusion-like process (see Remark~\ref{r:laplacian}). Instead, we aim to model traffic motion from $A$ towards two target nodes~$B$ and~$C$. To enforce this behavior, we construct two sets of directed edges $E_B$ and $E_C$ as follows. We start by collecting all the \emph{unidirectional} edges of $G$ into~$E_B$, while for any \emph{bidirectional} edge $(u,v)$ of $G$, we add $(u,v)$ to $E_B$ if $v$ is closer to~$B$ than~$u$; otherwise, we add $(v,u)$. The same construction applies for $E_C$, so that $E_B$ and $E_C$ encode the preferred orientations of $G$'s edges in order to reach $B$ and~$C$, respectively. We combine the two orientations in a single directed graph, setting
\[
G_{BC} = \left(V,\, E_B\cup E_C,\, \restr\omega{E_B\cup E_C}\right)\!.
\]
Note that each edge of $G$ can be traversed in $G_{BC}$ in at least one direction. Also, we turn $B$ and $C$ into \emph{sink nodes} by removing all outgoing edges, ensuring that mass flow cannot escape from them (Axiom~\ref{axiom:advection-I}). Then we choose the operator characterized in Theorem~\ref{t:characterization} (i.e., $A^{(4)}$) to build the advection operator~$A_{G_{BC}}$. Figure~\ref{f:pisa} shows the solution of the differential equation
\[
\dfrac{\de }{\de t} f_t = -A_{G_{BC}} f_t
\]
for different time frames, while Figure~\ref{fig:pisa_mass_bc} displays the mass accumulated at nodes $B$ and $C$ over time.

We note that the mass follows the two-target orientation, splitting into two branches directed towards $B$ and $C$ (Figure~\ref{500}). In this initial phase, although $B$ and $C$ are at comparable distances from $A$, the mass reaches $B$ faster than $C$, as the path to $B$ is more direct (Figure~\ref{fig:pisa_mass_bc}). The reticular structure between $A$ and $C$ causes the flow towards $C$ to disperse more. As the dynamic evolves, part of the mass accumulates in the central zone of Pisa near $C$ (Figure~\ref{5000}). In this transient phase, some of the mass initially directed toward $B$ diverts toward $C$ through the city center (and vice versa). At time $t = 25000$, the flow has almost completely left~$A$, concentrating around $B$, $C$, and along the streets connecting them, where some roads are more congested than others (Figure~\ref{25000}). Eventually, after a sufficiently long time, almost all the mass concentrates at $B$ and $C$ (Figure~\ref{50000}). Interestingly, at this point, there is more mass at $C$ than at $B$ (Figure~\ref{fig:pisa_mass_bc}). This inversion may be explained as follows: as noted earlier, the reticular structure around $C$ initially slows the advection towards $C$. However, this intricate network of streets acts as a reserve, incidentally capturing mass that diverts from both $C$ and $B$. Most of the mass in this reserve will eventually, albeit slowly, flow to the closer target $C$.
\end{example}

\section{Conclusions and future perspectives}
\label{sec:conclusion}

In this paper we have developed a comprehensive axiomatic framework for characterizing advection operators on oriented, distance-weighted graphs. By establishing key properties such as locality, mass transfer, and proper advection behavior, we have given a rigorous definition of what an operator that model transport dynamics on discrete networks should look like.

The axioms and their corresponding characterizations offer valuable insights, linking the choice of parameters in the definition of the operators to specific properties of the induced dynamics. Among the discussed operators, only the one satisfying all the axioms provides an accurate representation of mass movement, including conservation and directional flow. Through examples on both finite and infinite dimensional graphs and numerical simulation we have illustrated the applicability and potential use in real-world scenarios like traffic flow modeling.

The framework opens avenues for further exploration, including extensions to more general graphs and variations of the advection operator that may account for variable velocity or additional dynamics, e.g., considering the interplay with local and nonlocal diffusion, other phenomena with memory or the relaxation of the locality assumptions to encompass larger neighborhoods. We also plan to investigate the link between continuous dynamics and the scaling limit of the model proposed here. Applications of interest for these extensions may include the study of biological phenomena, particularly epidemiological processes within networks \cite{BellomoNetworkCovid,bertaglia2024pandemics}.

\section*{\refname}
\printbibliography[heading=none]

\end{document}